\documentclass[a4paper,USenglish,cleveref, autoref, thm-restate]{lipics-v2021}
\hideLIPIcs  %uncomment to remove references to LIPIcs series (logo, DOI, ...), e.g. when preparing a pre-final version to be uploaded to arXiv or another public repository

\usepackage{tikz}
\usepackage{algorithm}
\usepackage{booktabs}
\usepackage{algpseudocode}
\usepackage{empheq}

\newcounter{assump}

\newcommand{\opt}{\mathrm{opt}}
\newcommand{\alg}{\mathrm{alg}}

\title{Primal-Dual Online Algorithms for the Parking Permit Problem}

\author{Christian Coester}{Department of Computer Science, University of Oxford, United Kingdom}{christian.coester@cs.ox.ac.uk}{https://orcid.org/0000-0003-3744-0977}{}%TODO mandatory, please use full name; only 1 author per \author macro; first two parameters are mandatory, other parameters can be empty. Please provide at least the name of the affiliation and the country. The full address is optional. Use additional curly braces to indicate the correct name splitting when the last name consists of multiple name parts.

\author{Alex Turoczy}{Department of Computer Science, University of Oxford, United Kingdom}{alexanderturoczy@gmail.com}{https://orcid.org/0009-0003-1768-7427}{}

\authorrunning{C. Coester and A. Turoczy} %TODO mandatory. First: Use abbreviated first/middle names. Second (only in severe cases): Use first author plus 'et al.'

\Copyright{Christian Coester and Alex Turoczy} %TODO mandatory, please use full first names. LIPIcs license is "CC-BY";  http://creativecommons.org/licenses/by/3.0/

\ccsdesc[500]{Theory of computation~Online algorithms} %TODO mandatory: Please choose ACM 2012 classifications from https://dl.acm.org/ccs/ccs_flat.cfm 

\keywords{Online Algorithms, Competitive Analysis, Primal-Dual Algorithms, Parking Permit Problem}

\funding{Funded by the European Union (ERC, CCOO, 101165139). Views and opinions expressed are however those of the author(s) only and do not necessarily reflect those of the European Union or the European Research Council. Neither the European Union nor the granting authority can be held responsible for them.}%optional, to capture a funding statement, which applies to all authors. Please enter author specific funding statements as fifth argument of the \author macro.

\nolinenumbers %uncomment to disable line numbering

\EventEditors{Philip Bille, Seth Pettie, and Sabine Storandt}
\EventNoEds{3}
\EventLongTitle{34th Annual European Symposium on Algorithms (ESA 2026)}
\EventShortTitle{ESA 2026}
\EventAcronym{ESA}
\EventYear{2026}
\EventDate{August 31--September 4, 2026}
\EventLocation{L'Aquila, Italy}
\EventLogo{}
\SeriesVolume{388}
\ArticleNo{79}
\begin{document}

\maketitle

%TODO mandatory: add short abstract of the document
\begin{abstract}
The Parking Permit Problem (PPP), first studied by Meyerson, is a classic online problem generalizing the ski rental problem. We re-examine the PPP using the primal-dual scheme, obtaining simple algorithms with superior performance guarantees. Unlike previous work, which relied on reductions that degraded competitive ratios, we work with the problem's structure directly. We also provide near-matching lower bounds. Using the primal-dual framework, we find the PPP's deterministic competitive ratio exactly, and the randomized competitive ratio within an \emph{additive} constant. 
\end{abstract}

\section{Introduction}

The Parking Permit Problem (PPP) is a fundamental problem in online algorithms. It is the seminal \emph{leasing} problem, meaning an algorithm may purchase a resource to satisfy constraints, but where the purchases last a fixed contiguous time duration, regardless of whether it is used in this time. In this paper, we reinvestigate this classic problem via the primal-dual framework, obtaining new and simpler algorithms with improved performance guarantees.%that are more efficient, yet simpler.

The PPP can be defined by the following canonical parking permit scenario. Suppose that you walk to work when the weather is good, and drive when it rains. When you drive, you must possess a valid parking permit, but there are $K$ different types. Each permit type lasts a different duration and expires on a fixed day regardless of how many times it is utilized, where longer permits are cheaper on a per-day basis. The dilemma in the PPP is: what permit purchases should you make?

%We evaluate our algorithms according to a standard measure in the field of online algorithms, by its \emph{competitive ratio}. An algorithm's competitive ratio for a given online instance is the ratio between its cost, and the cost of the optimal solution had the online information been known from the beginning. An algorithm's competitive ratio is the supremum of its competitive ratio over problem instances, and the competitive ratio of a problem is the infimum over all online algorithms. 

% Prior work established the competitive ratio of the PPP to be $\Theta(K)$ and $\Theta(\log K)$ for deterministic and randomized algorithms, respectively~\cite{ppporiginal}. These bounds left relatively large constant factor gaps between the upper and lower bounds of $24$ (deterministic) and $32\ln 2\approx 22.18$ (randomized).

Essentially all prior work on PPP and variants \cite{ppporiginal, multippp} involve a reduction to what we shall call the \emph{Laminar} PPP. In the Laminar PPP, the permit cannot just be purchased starting from any arbitrary day. Instead, the permit durations form a laminar set family, meaning that two different permits' durations either do not overlap, or one is contained inside the other (see Figure~\ref{fig:laminar-diagram}). The reduction to the Laminar PPP loses a constant factor in the competitive ratio. In this paper, we provide techniques for solving leasing problems without assumptions of laminarity, leading to more direct algorithms with superior performance.
\begin{figure}[ht]
    \centering
    \begin{tikzpicture}[x=0.7cm,y=0.7cm, line cap=round]
        % --- Axis with ticks ---
        \draw[thick,-latex] (0.8,0) -- (13.8,0);
        \node[right] at (13.8,0) {Time};
        \foreach \i in {1,...,13} {
          \draw (\i,0) -- (\i,0.35);
        }
        \node[below] at (1.5,0) {1};
        \node[below] at (12.5,0) {12};
        % Positions for the right-side labels
        \def\xlabel{13.6}
        \def\yone{0.8}
        \def\ytwo{1.6}
        \def\ythree{2.4}
        \def\yfour{3.2}
        % --- Row 4: solid bar ---
        \draw[line width=2pt] (1.1,\yfour) -- (12.9,\yfour);
        \node[right] at (\xlabel,\yfour) {4};
        % --- Row 3: three long segments ---
        \draw[line width=2pt] (1.1,\ythree) -- (4.9,\ythree);
        \draw[line width=2pt] (5.1,\ythree) -- (8.9,\ythree);
        \draw[line width=2pt] (9.1,\ythree) -- (12.9,\ythree);
        \node[right] at (\xlabel,\ythree) {3};
        % --- Row 2: several medium segments ---
        \draw[line width=2pt] (1.1,\ytwo) -- (2.9,\ytwo);
        \draw[line width=2pt] (3.1,\ytwo) -- (4.9,\ytwo);
        \draw[line width=2pt] (5.1,\ytwo) -- (6.9,\ytwo);
        \draw[line width=2pt] (7.1,\ytwo) -- (8.9,\ytwo);
        \draw[line width=2pt] (9.1,\ytwo) -- (10.9,\ytwo);
        \draw[line width=2pt] (11.1,\ytwo) -- (12.9,\ytwo);
        \node[right] at (\xlabel,\ytwo) {2};
        % --- Row 1: many short segments ---
        \foreach \x in {1.1,2.1,3.1,4.1,5.1,6.1,7.1,8.1,9.1,10.1,11.1,12.1} {
          \draw[line width=2pt] (\x,\yone) -- (\x+0.8,\yone);
        }
        \node[right] at (\xlabel,\yone) {1};
    \end{tikzpicture}
    \caption{Illustration of the Laminar PPP with permit durations of length $1$, $2$, $4$ and $12$.}
    \label{fig:laminar-diagram}
\end{figure}

\subsection{Our Results}

Using primal-dual algorithms, we prove new upper and lower bounds for both deterministic and randomized algorithms. Our results in comparison to previous work are summarized in Table \ref{tab:ppp-bounds}.

\begin{table}[ht]
\centering
\small
\begin{tabular}{llcc}
    \toprule
    & & Deterministic PPP & Randomized PPP \\
    \midrule
    \multirow{2}{*}{Upper bounds}
        & Previous & $4K$ \cite{ppporiginal} & $16 \ln K$ \cite{ppporiginal} \\
        & New      & $K$ & $\ln K + \ln \ln K + O(1)$ \\
    \midrule
    \multirow{2}{*}{Lower bounds}
        & Previous & $K/6$ \cite{ppporiginal}, $\ 3$ for $K=3$ \cite{threeppp} & $\log_2(K)/2$ \cite{ppporiginal} \\
        & New      & $K$ & $\ln K + \ln \ln K + o(1)$ \\
    \bottomrule
\end{tabular}
\vspace{6pt}
\caption{Summary of prior and new upper and lower bounds for the PPP.}
\label{tab:ppp-bounds}
\end{table}

In Section \ref{sec:deterministicalgorithms}, we prove that the deterministic competitive ratio is exactly $K$, closing a factor 24 gap between prior upper and lower bounds. Our algorithm is simple and serves as a warm-up to the randomized algorithm.

\begin{theorem} \label{thm:deterministic}
    The deterministic competitive ratio for the online PPP is exactly $K$.
\end{theorem}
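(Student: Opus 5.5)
We prove the two directions separately: a primal-dual algorithm gives the upper bound, and an adversary gives the lower bound.

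\emph{Setup.} Fix permit types $k=1,\dots,K$ with durations $d_1<\dots<d_K$ and costs $c_1<\dots<c_K$. Any permit can be bought starting on any day. The covering LP has a variable $x_{k,s}\ge 0$ for the permit of type $k$ starting on day $s$. For each rainy day $t$ there is the constraint
\[
\sum_k\sum_{s\in(t-d_k,t]} x_{k,s}\ge 1 .
\]
The dual has a variable $y_t\ge 0$ for each rainy day, subject to
\[
\sum_{t\in[s,s+d_k)} y_t\le c_k \quad\text{for every } (k,s).
\]

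\emph{Upper bound $K$.} When a rainy day $t$ arrives uncovered, raise $y_t$ continuously until some dual constraint becomes tight. The relevant constraints are those of permits $(k,s)$ whose interval contains $t$. Then buy, for each type $k$, a permit of type $k$ that has just become tight and covers $t$; to make later days benefit, choose the one starting latest, i.e. the tight $(k,s)$ with the maximal $s\le t$.

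The accounting I would aim for is that each raise of $y_t$ pays for at most one permit per type, which charges cost at most $K\cdot y_t$ in total. To make this work, modify the rule so that only the permits that are tight at the moment $t$ gets covered are bought, and at most one of each type. Each such permit has cost equal to the dual mass in its window. We then need that the dual mass in this window was not already charged to an earlier purchase of the same type. I expect exactly this to be the main obstacle: a tight constraint of type $k$ could share dual mass with an earlier bought permit of type $k$.

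The fix I would try first is to buy, for each type $k$, the tight window that \emph{starts at} the earliest $s$ such that the window ends after $t$. Two bought permits of the same type then have disjoint windows. The reason is that, once bought, the permit covers all days in its window, so no later day inside it is ever raised. Hence the total cost of bought type-$k$ permits is at most $\sum_t y_t$. Summing over the $K$ types gives $\alg\le K\sum_t y_t\le K\cdot\opt$ by weak duality, since $y$ is feasible by construction.

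\emph{Lower bound $K$.} Use an adversary with costs that grow rapidly and durations nested with huge ratios, $d_{k+1}\gg d_k$ and $c_{k+1}=(1+\epsilon)c_k$ or similar. The adversary makes day $t$ rainy exactly when the algorithm currently holds no valid permit, and stops after a long horizon. The algorithm then buys a sequence of permits. Against it, compare the offline strategy that buys a single permit of the right type covering the whole interval. I would set the costs so that every type of purchase costs the algorithm roughly as much as the offline total, e.g. $c_k$ all nearly equal with durations scaling steeply. The analysis charges the algorithm's purchases type by type: inductively, over a window of length $d_k$, the algorithm must pay about $k$ times the cost of one permit before it buys a type-$k$ permit, much like ski rental iterated across $K$ levels. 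Getting this exact constant $K$, rather than $K/6$, likely requires the nearly equal costs limit; I would verify the induction carefully there.
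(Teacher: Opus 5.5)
Your two-part structure matches the paper's, but both parts have a genuine gap. \textbf{Upper bound.} Your final ``fix'' does not work. For $t\ge d_k$, the earliest-starting type-$k$ window containing $t$ is $[t-d_k+1,t]$. Since all $y_u$ with $u>t$ are still $0$, this window carries the most dual mass, so it is the tight window your rule selects. The permit you buy therefore expires at $t$ and protects no future day. Disjointness then breaks: the next uncovered rainy day can be $t+1$, and its window $[t-d_k+2,t+1]$ contains almost all of the previous window's dual mass. Your justification (no later day inside a bought window is ever raised) only controls \emph{future} dual mass. The double charging you correctly spotted for your first rule comes from \emph{past} dual mass lying in both windows. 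Concretely, take $K=1$, $d_1\ge 2$ and every day rainy. Your rule buys a new permit of cost $c_1$ on essentially every day, while $\opt=c_1\lceil T/d_1\rceil$, a ratio of about $d_1$.

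The missing idea is that a permit's cost depends only on its type, so the window you charge need not be the permit you buy. Raise $y_t$ until $\sum_{s=t-d_k+1}^{t}y_s=c_k$ for some $k$; this is the tightest type-$k$ constraint through $t$, which also keeps $y$ feasible. Charge $c_k$ to this \emph{backward} window, but buy the type-$k$ permit \emph{starting} at $t$. Because that permit covers $[t,t+d_k-1]$, the next type-$k$ purchase happens at some $t'\ge t+d_k$. Hence the charged windows $[t-d_k+1,t]$ and $[t'-d_k+1,t']$ are disjoint, type-$k$ spending is at most $\sum_t y_t$, and $\alg\le K\sum_t y_t\le K\cdot\opt$.

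\textbf{Lower bound.} This part is only a sketch, and the regime you lean towards cannot give $K$. Suppose $c_K\le(1+\epsilon)^{K-1}c_1$. Then the algorithm that buys a type-$K$ permit on every uncovered rainy day is $(1+\epsilon)^{K-1}$-competitive on every instance. Its purchase days are at least $d_K$ apart, so no single permit covers two of them, and $\opt$ pays at least $c_1$ per purchase.

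What is needed is $c_{k+1}/c_k\to\infty$ together with steeply decreasing per-day cost $c_k/d_k$, e.g.\ $c_k=N^{k-1}$ and $d_k=N^{2(k-1)}$. The adversary makes a day rainy exactly when the algorithm holds no valid permit. The level-$(K+1)$ instance is then a concatenation of level-$K$ instances, and it stops once a type-$(K+1)$ permit is bought or after $d_{K+1}$ days. Assume inductively $\alg\ge(K-\epsilon_{K,N})\opt$ on level-$K$ instances, with $\epsilon_{K,N}\to0$ as $N\to\infty$. There are two cases.
\begin{itemize}
\item If no type-$(K+1)$ permit is bought, the algorithm must hold a permit on all $d_{K+1}=N^{2K}$ days at per-day cost at least $N^{-(K-1)}$. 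It pays at least $N^{K+1}\ge N\cdot\opt$, since $\opt\le c_{K+1}=N^K$.
\item Otherwise, the hypothesis applied to the preceding complete level-$K$ instances gives $\alg-c_{K+1}\ge(K-\epsilon_{K,N})(\opt-c_K)$. Using $\opt\le c_{K+1}=Nc_K\le\alg$, this rearranges to $(1+K/N)\alg\ge(K+1-\epsilon_{K,N})\opt$.
\end{itemize}
Absorbing the error term $Kc_K$ into $\alg$ is exactly where rapidly growing costs are essential. That is the opposite of your nearly-equal-costs limit.
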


Our randomized algorithm uses more sophisticated techniques. Buchbinder and Naor~\cite{coveringpackinglp} previously observed that Meyerson's $O(\log K)$-competitive randomized algorithm can be recovered via the primal-dual framework, since the \emph{Laminar} PPP is a covering problem with row-sparsity $K$. However, if we avoid the reduction to the laminar case and the corresponding constant factor loss, the central difficulty is that the linear program of the non-laminar version has unbounded row-sparsity, which means directly using generic online covering algorithms would result in unbounded competitive ratio for fixed $K$. Our main contribution is doing multiplicative updates over the \emph{sum} of relevant variables, rather than the individual variables alone.

\begin{theorem} \label{thm:randomized}
    The randomized competitive ratio for the online PPP is at most $\ln K + \ln \ln K + O(1)$ as $K \rightarrow \infty$.
\end{theorem}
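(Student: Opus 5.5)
We obtain the randomized bound in two stages: first an online fractional algorithm, designed by the primal-dual method, for the non-laminar LP relaxation; then an online rounding that loses at most an additive $O(1)$ in the ratio. The primal LP has a variable $x_{k,s}\ge 0$ for each permit type $k\in\{1,\dots,K\}$ (cost $c_k$, duration $D_k$) bought at each start day $s$. For each rainy day $t$ there is a covering constraint $\sum_{k}\sum_{s\in(t-D_k,t]} x_{k,s}\ge 1$. The dual has a variable $y_t\ge0$ for each rainy day, subject to packing constraints $\sum_{t\in[s,s+D_k)} y_t\le c_k$. As noted before the statement, the rows have unbounded sparsity. We therefore aggregate per type: let $X_k(t)=\sum_{s\in(t-D_k,t]}x_{k,s}$ be the fractional coverage that type $k$ gives day $t$. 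When day $t$ arrives uncovered, we raise $y_t$ continuously. In parallel we increase each $X_k(t)$ multiplicatively, always by buying new mass at $s=t$ (the latest start, which dominates earlier starts), via
\[
\frac{dX_k}{dy_t}=\frac{1}{c_k}\Bigl(X_k+\frac{1}{\beta K}\Bigr),
\]
and we stop once $\sum_k X_k(t)\ge 1$. Here $\beta$ is a parameter tuned later.

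The analysis has three steps. \emph{(i) Primal cost versus dual increase.} The rate of cost is $\sum_k c_k\,dX_k/dy=\sum_k X_k+1/\beta\le 1+1/\beta$, so the primal cost is at most $(1+1/\beta)\sum_t y_t$. \emph{(ii) Approximate dual feasibility.} Fix a window $[s,s+D_k)$. Fresh type-$k$ mass bought at any day in the window still covers every later day in the window, so $X_k$ is monotone on it. Integrating the ODE gives $X_k\ge \frac{1}{\beta K}\bigl(e^{Y/c_k}-1\bigr)$, where $Y$ is the dual mass accumulated in the window. Once $X_k\ge1$, every later day in the window is already covered and no longer raises its dual. Hence $Y\le c_k\ln(1+\beta K)$, so scaling $y$ down by $\ln(1+\beta K)$ makes the dual feasible. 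The competitive ratio of the fractional algorithm is then $(1+1/\beta)\ln(1+\beta K)$. Taking $\beta=\ln K$ gives $\ln K+\ln\ln K+O(1)$.

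\emph{(iii) Rounding.} We fix a single uniform threshold $\theta\in(0,1]$ and buy a type-$k$ permit at day $t$ when the cumulative type-$k$ mass crosses $\theta$. This is a "sum-based" rounding, since the crossings are determined by the aggregate mass, not by individual variables. A day $t$ is covered whenever some type has $X_k(t)$ crossing $\theta$ within its window. The expected cost is $O(1)$ times the fractional cost for each type, but this only gives a multiplicative constant. To get an additive one instead, we expect to round the ordered vector $(X_1,\dots,X_K)$ with the standard ski-rental-style threshold technique. We accept that the online fractional solution can be converted to a randomized integral solution with expected cost at most the fractional cost plus $O(\opt)$, by buying the most expensive type triggered and charging the losses to the dual.

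The main obstacle is step (iii), which is getting an additive rather than multiplicative rounding loss in the non-laminar setting. If a direct rounding with additive loss fails, the fallback is to build the rounding into the algorithm. In that version we maintain a distribution over integral states whose marginals match the fractional $X_k$, and we redo the dual accounting of step (ii) directly against the integral cost, absorbing the discrepancy into the $O(1)$ term. A secondary subtlety is verifying in step (ii) that an uncovered day really does force growth in every relevant window simultaneously, despite overlapping non-laminar windows. Buying at the latest start $s=t$ is exactly what ensures this.
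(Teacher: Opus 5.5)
\textbf{Gap: the rounding step (iii) is not proven.} Your steps (i) and (ii) are the paper's fractional algorithm and its analysis. Integrating your ODE with $\beta=\ln K$ gives exactly the update
$x_{k,t}=\bigl(\frac{1}{K\ln K}+\sum_{s=t-D_k+1}^{t-1}x_{k,s}\bigr)\bigl(\exp(y_t/C_k)-1\bigr)$.
The cost rate is $\sum_k(X_k+\frac{1}{\beta K})\le 1+\frac1\beta$, and the dual violation is at most $\ln(1+\beta K)$. One small fix: along a window $[s,s+D_k)$, the monotone quantity is the type-$k$ mass \emph{bought inside the window}, not the coverage $X_k$, which drops as older mass expires. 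The paper's induction runs over that window-restricted sum, which is still at most $1$.

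The theorem is about randomized integral algorithms, so the rounding is where the remaining work lies, and you do not supply one. You ``accept'' that a rounding with additive $O(\opt)$ loss exists. Neither the ``most expensive type triggered, charge the losses to the dual'' remark nor the fallback is an argument.

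The scheme you do describe is not even feasible. On the first rainy day your algorithm buys $x_{k,1}=\frac{1}{K\ln K}(\exp(y_1/C_k)-1)$ for every $k$. For $K\ge 2$ these are all strictly positive and sum to $1$, so each is strictly below $1$. For any $\theta>\max_k x_{k,1}$, no type's cumulative mass crosses $\theta$. That day is then left uncovered with probability $1-\max_k x_{k,1}>0$. Even the multiplicative $O(1)$ loss you mention would only give $O(\ln K)$, not the claimed bound.

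\textbf{The missing idea: a lossless online rounding (Lemma~\ref{lem:randomization}).}
\begin{enumerate}
\item Normalize the fractional solution so that on day $t$ only the variables $x_{k,t}$ increase, and every rainy day has coverage exactly $1$. Your algorithm already satisfies both.
\item The integral algorithm holds at most one permit. Whenever a rainy day $t$ is not covered, it buys exactly one permit, of type $k$ with probability $x_{k,t}/\sum_{k'}x_{k',t}$.
\item By induction over the rainy days $t_1<t_2<\cdots$, one shows $\mathbb{P}[\text{buy }(k,t_j)]=x_{k,t_j}$. The integral solution is uncovered at $t_j$ exactly when the permit it held at $t_{j-1}$ expires in $(t_{j-1},t_j]$. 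By the inductive hypothesis, this has probability equal to the fractional mass expiring in that interval. Since fractional coverage is exactly $1$ at both $t_{j-1}$ and $t_j$, that expiring mass equals the fresh mass $\sum_{k'}x_{k',t_j}$.
\end{enumerate}
Hence $\mathbb{E}[\alg]=\alg_{\textup{frac}}$, with no additive or multiplicative loss. Non-laminarity causes no difficulty, because only the current day's fresh mass is ever sampled. With this in place of your step (iii), your fractional bound $(1+\frac{1}{\ln K})\ln(1+K\ln K)=\ln K+\ln\ln K+O(1)$ carries over directly to randomized algorithms.
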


We also provide an improved lower bound, showing our algorithm's competitive ratio only asymptotically differs from the lower bound by an \emph{additive} constant of 2. In contrast, prior bounds left a \emph{multiplicative} gap of $32\ln 2\approx 22.18$.

\begin{theorem}\label{thm:randLB}
    Any randomized online algorithm for the PPP with $K$ permits has a competitive ratio of at least $S_K = \ln K + \ln \ln K + o(1)$, defined by $S_1 := 1$ and $S_{K+1} := S_K + \frac{S_K}{\exp(S_K)-1}$.
\end{theorem}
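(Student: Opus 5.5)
We prove the bound by induction on $K$, using Yao's principle: for each $K$ we build a distribution over request sequences, together with a permit structure with $K$ types, such that every deterministic online algorithm pays in expectation at least $S_K$ times the expected optimal cost. It is more convenient to use an equivalent continuous-time, fractional version of the game. The adversary requests coverage of time at a rate it chooses, and the online player holds a fractional amount of each permit type. By standard rounding, and by taking permit lengths to be extremely different in scale, the randomized ratio equals the fractional ratio of this game. We then write the quantity $S_K$ as the value of a game and derive the recursion $S_{K+1}=S_K+S_K/(e^{S_K}-1)$ from it.

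\textbf{Base case.} For $K=1$ the only permit type is forced, so the ratio is $S_1=1$.

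\textbf{Inductive step.} We add a new permit type $K+1$ whose duration is far longer than all the others. We choose its cost so that, over one of its periods, it matches the cost $C$ of serving requests optimally with the first $K$ types. Within one long period, the adversary runs independent phases. Each phase is a hard instance for the first $K$ types, scaled so that its optimal cost is a small quantity $\varepsilon$. Normalize the cost of the long permit to $1$ and let $t$ be the accumulated optimal short-permit cost so far. The adversary stops at a random time $T$. Its density is chosen to make the online player indifferent between two options: buying fractional long permit $x(t)$, or continuing to pay $S_K$ per unit of optimal short cost on the portion $1-x(t)$ not yet covered. This recovers the classic continuous ski-rental equalizer, except that the rental cost now has ratio $S_K$ against the optimum instead of $1$.

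Concretely, the online cost up to the stopping time $t$ is
\[
x(t)+S_K\int_0^t (1-x(s))\,ds,
\]
while the optimum is $\min(t,1)$. We solve for the equalizing strategy and the equalizing stopping distribution. The ratio $c$ must then satisfy the ODE $x'=(S_K/c)(\ldots)$, and we expect a solution of the form $x(t)\propto e^{S_K t}-1$ on $[0,1]$. The boundary conditions, namely $x(1)$ consistent with $c$ at $t=1$ and the stopping mass placed there, should yield $c=S_K e^{S_K}/(e^{S_K}-1)=S_K+S_K/(e^{S_K}-1)$. This matches the generalization of the $e/(e-1)$ ski-rental bound, which is the case $S_K=1$ and gives $S_2=e/(e-1)$.

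\textbf{Main obstacle.} The hard step is justifying that the short-permit cost really behaves like rental at ratio $S_K$ against the optimum, conditioned on the long-permit fraction. Two issues arise. First, the player may adapt its short-permit behaviour, so we must show the phases are separable: each phase is an independent copy of the level-$K$ hard distribution scaled by $1-x$. Lower bounds on expected ratios do not directly add, so we will instead maintain the stronger invariant that the level-$K$ instance forces expected online cost at least $S_K$ times the expected optimum, with the optimum concentrated. Second, the new long permit must not help inside a single phase beyond the $1-x$ scaling. We handle this by separating time scales: the long duration is much larger than the phase lengths, and these are much larger than the shorter durations. This separation costs only $o(1)$. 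The asymptotic form $S_K=\ln K+\ln\ln K+o(1)$ follows from the recursion by comparing it with the ODE $S'=S/(e^S-1)\approx Se^{-S}$. That ODE gives $e^S/S\approx K$, hence $S=\ln K+\ln\ln K+o(1)$.
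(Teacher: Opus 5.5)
Your overall architecture matches the paper's: induction on $K$, a much longer permit on top of the $K$ shorter types, and a continuous ski-rental game in which ``renting'' costs $S_K$ per unit of short-permit optimum on the uncovered fraction $1-x$. However, as written your outer game cannot certify more than $S_K$. You set the long permit's cost equal to the short-permit optimum of a whole long period, so the accumulated optimum $t$ never exceeds $1$, and you place the leftover stopping mass at $t=1$. If the stopping time $\tau$ is supported on $[0,1]$, the option $x\equiv 0$ costs $S_K\,\mathbb E[\tau]=S_K\,\mathbb E[\min(\tau,1)]$, so no such distribution gives a ratio above $S_K$.

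The adversary must be able to run past the break-even point. Make the long permit much cheaper than short-permit service of its full period (the paper takes $C_k=N^k$, $D_k=M^k$ with $M\gg N$). Use $\Pr[\tau>t]=e^{-S_Kt}$ for $t\in[0,1)$, and put the remaining mass $e^{-S_K}$ at some $t\ge 1+1/S_K$. Then every non-decreasing $x$ has expected cost at least $1$, while $\mathbb E[\min(\tau,1)]=(1-e^{-S_K})/S_K$. This gives exactly $S_Ke^{S_K}/(e^{S_K}-1)$.

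With this fixed, your route is a genuine alternative to the paper's. The paper does not use Yao's principle. A fractional algorithm is deterministic, so in the paper's indexing (type $K$ is the new permit) it uses an adaptive adversary that stops as soon as the cumulative cost reaches $S_K\min\{\sum_i\opt_i,C_K\}$. If this never happens, the inequality holds at every phase boundary, the greedy maximal $x_K$ makes it tight, and the ODE $C_Kx_K'+S_{K-1}(1-x_K)=S_K$ forces total cost at least $S_KC_K$. This avoids distributions and conditioning, and Lemma~\ref{lem:laminarsufficient} removes boundary effects between phases once and for all.

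Your version yields an explicit oblivious hard distribution. It composes phases by linearity of expectation conditioned on the history, after which the outer game involves only the deterministic function $\mathbb E[x]$. If $\tau$ counts phases, concentration of the optimum is unnecessary, since $\mathbb E[\min(\sum_i\opt_i,1)]\le\min(\mathbb E[\sum_i\opt_i],1)$.

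Two steps still need an argument.
\begin{itemize}
\item ``Scaled by $1-x$'' presumes $x$ is constant within a phase. If the algorithm raises $x$ mid-phase in response to that phase, let the simulated level-$K$ algorithm buy the same amount of a type-$K$ permit at that moment. This keeps its coverage at least $1-x_0$, where $x_0$ is the value at the phase start, so dividing by $1-x_0$ gives a legitimate level-$K$ online algorithm. The total extra cost is at most $C_K=o(C_{K+1})$. The paper's bound in terms of $x_K(t_i)$ tacitly needs the same fact.
\item One level-$K$ hard instance lasts about $D_K$ days, so a type-$K$ permit bought late in one phase can cover much of the next. Separate phases by dry gaps of length $D_K$, or reduce to the laminar case as the paper does.
\end{itemize}

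Finally, your ODE heuristic for $S_K=\ln K+\ln\ln K+o(1)$ is the idea the appendix makes rigorous via $\Phi(x)=\int_1^x\frac{e^t-1}{t}\,dt$.
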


We conjecture that this lower bound is exactly tight. At a high level, it is based on a recursive combination of ski rental lower bounds.

\subsection{Related Work}
\subsubsection{The Parking Permit Problem}
Several other online problems involving purchasing dilemmas preceded the PPP. The classic online ski rental problem is one of the most well-understood problems in online algorithms, encapsulating the dilemma of whether to rent or buy a particular resource. For this problem, the competitive ratio is 2 for deterministic algorithms, and $e/(e-1)$ for randomized algorithms~\cite{KarlinMMO94}. It corresponds to the special case of the PPP where $K=2$ and the more expensive permit has infinite duration.
%The Multi-option Ski Rental is a direct precursor to the PPP, where there are multiple ski types, each lasting a fixed number of ski days and with distinct costs. The multi-option ski rental problem is no harder than the original problem, in the sense that the same competitive ratios are identical.

The PPP was first introduced by Meyerson \cite{ppporiginal}. Whereas many other problems considered purchases of resources which then last forever, the PPP is one of the earliest problems using a \emph{leasing optimization model}, where resources have a fixed lifespan. Meyerson presented upper and lower bounds for deterministic and randomized algorithms, which are matching up to multiplicative constants. In particular, the paper showed the competitive ratio for deterministic algorithms was between $\frac{K}{6}$ and $4K$, and the competitive ratio for randomized algorithms is between $\log_2(K)/2$ and $16 \ln K$. This work also introduced the laminar version of the problem, on which their algorithms and lower-bound results were based. Additionally, they showed applications to variants of Steiner forest problems.

A generalization of PPP, called the 2D PPP, was studied by Hu et al.~\cite{2dppp}. In this problem, demands vary along two axes, the time and also over required capacities. This setting has applications in cloud computing, where for example the number of virtual machines required may vary over time based on the popularity of a website. This paper demonstrated that the competitive ratio of this problem for deterministic algorithms is $\Theta(K)$, which was achieved using pseudo-polynomial time algorithms. De Lima et al. \cite{multippp} proposed another generalization of PPP, the Multi-PPP, in which the online demands can be any non-negative integer. They demonstrated a competitive ratio preserving pseudo-polynomial time reduction from the Laminar Multi-PPP to Laminar PPP, which involves running an arbitrary number of PPP instances concurrently. Therefore, Meyerson's previous results (which were also based on Laminar PPP) also hold for Multi-PPP. They also demonstrated that this could be turned into a strictly polynomial-time algorithm, while losing a factor of 2 in the competitiveness. They showed this technique can also convert Hu et al.'s 2D PPP algorithm into a strictly polynomial-time algorithm.

Since the introduction of the PPP, many other problems have been reformulated in the leasing optimization model, including Online Set Cover with Leasing \cite{setcoverleasing} and Leasing Non-Metric Facility Location~\cite{vertexcoverfacilitylocationleasing}.

There has also been some work studying the PPP in learning-augmented settings. The first, by Kharchenko and Kononov \cite{threeppp}, focused on the case of $K=3$. A recent preprint by Ameli, Sanita and Venzin \cite{coveringproblemsaugmented} proposed a general framework for learning-augmented algorithms for covering/packing problems, with the PPP as a special case, along with Set Cover, Non-Metric Facility Location, and Metric Facility Location. Coester, Tudose and Turoczy~\cite{CoesterTT26} present learning-augmented algorithms based on dual predictions for a number of online minimization problems, including the parking permit problem.

\subsubsection{Primal-Dual Algorithms}
For decades, LP duality theory has been utilized for the design and analysis of exact algorithms \cite{maximumflow, edmondsflowers, improvedflow, edmondsmatching, submodular} and approximation algorithms \cite{setcover, lpintegralitygap, constrainedforestproblems, steinertrees}. In the field of online algorithms, Buchbinder and Naor first developed a primal-dual framework for covering-packing problems across several papers \cite{coveringpackinglp, onlinerouting, pdsurvey}. %In particular, for online set cover in a universe with $n$ elements and $m$ sets, they developed a randomized $O(\log m\log n)$-competitive algorithm, with asymptotically matching lower bounds. 
In light of their work, it has been observed that several other earlier papers can be re-interpreted from a primal-dual perspective \cite{onlinebipartitite, onlinepaging}, leading to fruitful extensions \cite{onlinebipartititeextended, pdsurvey}, including convex objectives~\cite{AzarBCCCG0KNNP16}, and new breakthroughs on problems such as weighted paging~\cite{BansalBN12}.

\subsection{Preliminaries}

We write $\alg(I)$ for the cost of an algorithm $\alg$ on instance $I$, and similarly $\opt(I)$ for the optimal offline algorithm. The algorithm is $\rho$-competitive if $\alg(I)\le \rho\cdot\opt(I)$ for any instance $I$, and the minimal such $\rho$ is also called the competitive ratio. We will often drop $I$ from the notation when it is clear from the context and write $\alg$ and $\opt$ for both the names of the algorithms and their cost.

For $k=1,\dots,K$, we denote by $D_k\in\mathbb Z_{>0}$ the duration and by $C_k\in\mathbb R_{>0}$ the cost of permit type $k$. A sequence of days $\boldsymbol \sigma=(\sigma_1,\dots,\sigma_T)$ is revealed online, where $\sigma_t=1$ denotes that day $t$ is rainy and $\sigma_t=0$ otherwise. %We now introduce the linear programming formulation of the PPP.

By letting $x_{k, t}$ indicate the algorithm's decision to buy a permit of type $k$ on day $t$, we introduce a formulation of the problem as an integer linear program as follows.

\begin{equation} \label{program:pppPrimal}
        \begin{array}{ll@{}lll}
        \text{Minimize}  & \displaystyle\sum\limits_{t=1}^{T} \sum\limits_{k=1}^{K} C_kx_{k, t} &\\
        \text{subject to}& \displaystyle\sum\limits_{k=1}^K \sum\limits_{s=t-D_k+1}^t  &x_{k, s} \geq \sigma_t,  &t=1 ,\dots, T\\
                         &                                                &x_{k, t} \in \mathbb{Z}_{\geq 0},  &t = 1, \dots, T, k=1, \dots, K
        \end{array}
    \end{equation}

By relaxing the integrality constraint of \eqref{program:pppPrimal}, we arrive at a primal covering linear program and its corresponding dual packing program. We will call the formulation of \eqref{program:pppprimalfractional} the \emph{Fractional PPP}.

\begin{empheq}[box=\fbox]{equation} \label{program:pppprimalfractional}
\begin{aligned}
\textbf{Primal:} \\
\text{Minimize} \quad & \sum_{t=1}^{T} \sum_{k=1}^{K} C_k x_{k,t} \\
\text{subject to} \quad & \sum_{k=1}^K \sum_{s=t-D_k+1}^t x_{k,s} \ge \sigma_t, 
    && t=1,\dots,T \\
& x_{k,t} \ge 0, 
    && t=1,\dots,T, \; k=1,\dots,K
\end{aligned}
\end{empheq}

\begin{empheq}[box=\fbox]{equation} \label{program:dual}
\begin{aligned}
\textbf{Dual:} \\
\text{Maximize} \quad & \sum_{t=1}^{T} \sigma_t y_t \\
\text{subject to} \quad & \sum_{s=t}^{t+D_k-1} y_s \le C_k, 
    && t=1,\dots,T, \; k=1,\dots,K \\
& y_t \ge 0, 
    && t=1,\dots,T
\end{aligned}
\end{empheq}

One way to interpret the dual program is the following. Each variable $y_t$ can be interpreted as representing the amount of blame we assign to day $t$ for the cost of the PPP instance. The dual constraint $\sum_{s=t}^{t+D_k-1} y_s \le C_k$ is then interpreted as ensuring that the assignment of blame is consistent; we cannot blame more than $C_k$ to a contiguous sequence of $D_k$ days, as we could cover this sequence with a single permit costing $C_k$.

\section{Warm-Up: Deterministic Algorithms} 
\label{sec:deterministicalgorithms}

We begin by presenting our deterministic online algorithm and lower bound, which serve as a warm-up to the randomized results in the subsequent section.

\subsection{Deterministic Primal-Dual Algorithm}

We provide a simple $K$-competitive primal-dual algorithm for the deterministic PPP, given in Algorithm \ref{alg:kcompdeterministic}. %This same algorithm is also $K$-competitive for Multi-PPP, which is proven in \ref{subsec:kcompetitivemultippp}.

\begin{algorithm}[h]
    \caption{$K$-competitive deterministic PPP algorithm}\label{alg:kcompdeterministic}
    \begin{algorithmic}[1]

    %\Procedure{OnlineAlgorithm}{} \label{proc:stabledual}
    \State Initialize $\mathbf x \gets \mathbf 0$ and $\mathbf y \gets \mathbf 0$.
    \For{each rainy day $t$ not covered by any permit}
        \State Increase $y_t$ continuously until $\sum_{s=t-D_k+1}^t y_s = C_k$ for some $k$.
        \State Purchase such permit $k$, i.e., set $x_{k,t}=1$.
    \EndFor
    %\EndProcedure
    \end{algorithmic}
    \end{algorithm}
    \begin{lemma} \label{lem:kcompetitivealg}
        Algorithm \ref{alg:kcompdeterministic} is a $K$-competitive algorithm for the PPP.
    \end{lemma}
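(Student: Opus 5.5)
The plan is a standard online primal-dual analysis. I will show three things: Algorithm~\ref{alg:kcompdeterministic} always produces a feasible (integral) primal solution; the final vector $\mathbf y$ is feasible for the dual program \eqref{program:dual}; and the primal cost is at most $K$ times the dual objective. Weak duality then gives $\opt \ge \sum_t \sigma_t y_t$, because the optimal integral solution is feasible for the fractional primal \eqref{program:pppprimalfractional}. Combining, $\alg \le K\cdot \opt$.

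\emph{Primal feasibility.} Every rainy day $t$ is either already covered, or the algorithm buys a permit $k$ on day $t$, which covers days $t,\dots,t+D_k-1\ni t$. The loop is well defined because all $C_k>0$, so raising $y_t$ eventually makes some backward window sum $\sum_{s=t-D_k+1}^t y_s$ reach $C_k$.

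\emph{Dual feasibility.} A variable $y_s$ is only ever raised while day $s$ is being processed, and it is fixed afterwards. Fix a dual constraint, i.e.\ a window $W=[r,r+D_k-1]$ of type $k$, and let $t'$ be the last day in $W$ with $y_{t'}>0$ (if there is none, the constraint holds trivially). Since $t'-r<D_k$, the part of $W$ carrying positive mass lies inside the backward window $[t'-D_k+1,t']$. When $y_{t'}$ was raised, all later $y$'s were still $0$. The raise stopped at the first moment some backward window sum ending at $t'$ reached its cost, so at the end of that step every backward sum ending at $t'$ is at most its corresponding $C_{k'}$, in particular $\sum_{s=t'-D_k+1}^{t'}y_s\le C_k$. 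These values never change afterwards, so the sum over $W$ is at most $C_k$.

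\emph{Cost charging, the main step.} Each purchase of permit $k$ on day $t$ costs exactly $C_k=\sum_{s=t-D_k+1}^t y_s$, so I charge it to the $y_s$ in its backward window. The key claim is that each $y_s$ is charged by at most one purchase of each type, hence at most $K$ times in total. Suppose two purchases of type $k$ at days $t<t'$ both charge $y_s$. Then $s$ lies in both $[t-D_k+1,t]$ and $[t'-D_k+1,t']$, which forces $t'-t<D_k$. But then day $t'$ was already covered by the permit bought on day $t$, so the algorithm would not have acted on $t'$, a contradiction. This gives
\[
\alg=\sum_{\text{purchases}}C_k\le K\sum_s y_s .
\]
Since $y_s>0$ only on rainy days, $\sum_s y_s=\sum_s\sigma_s y_s$, and we obtain $\alg\le K\sum_s\sigma_s y_s\le K\cdot\opt$. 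The delicate point of the whole argument is this disjointness of same-type charging windows, which relies on the algorithm processing only uncovered rainy days.
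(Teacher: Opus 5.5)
Your proposal is correct and follows the same route as the paper: primal feasibility, dual feasibility, and charging each type-$k$ purchase of cost $C_k$ to the $y_s$ in its backward window, where same-type windows are disjoint because consecutive type-$k$ purchases are at least $D_k$ days apart, which gives $\alg \le K\sum_s y_s \le K\cdot\opt$. Your dual-feasibility argument is more careful than the paper's one-line justification; its only implicit step is that every backward sum ending at $t'$ is already at most its cost when the raise at $t'$ begins, and this follows by applying your same argument inductively to earlier days.
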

    \begin{proof}
        Let $P$ and $D$ represent the primal and dual values, respectively. We prove the following claims.
        \begin{enumerate}
        \item \label{claim:feasibleprimal} Algorithm \ref{alg:kcompdeterministic} produces a feasible online primal solution $\mathbf x$.
        \item At all points in the algorithm $P \leq K \cdot D$ \label{claim:primalbounded}
        \item Algorithm \ref{alg:kcompdeterministic} produces a feasible dual solution $\mathbf y$. \label{claim:feasibleonlinedual}
    \end{enumerate}

    Claim \ref{claim:feasibleprimal} is clear, as all rainy days will eventually be covered by a permit, because by increasing $y_t$, eventually $\sum_{s=t-D_k+1}^t y_s$ will reach some $C_k$.

    Claim \ref{claim:primalbounded} holds because the maximum amount spent on a fixed permit type $k$ is at most $D$. To see this, observe that we only purchase type $k$ at time $t$ if $\sum_{s=t-D_k+1}^t y_s = C_k$, but then we do not buy any new permits for $D_k$ days. Therefore, each purchase of cost $C_k$ for a type $k$ permit can be uniquely associated with a dual cost of $C_k$. Any time we make a purchase of a type $k$ permit costing $C_k$, we can charge this to the dual cost of $C_k$, and each charge must be unique as each type $k$ permit purchase is separated from the next by a time duration of $D_k$. This proves the claim.

    Claim \ref{claim:feasibleonlinedual} is evident, as no dual constraint can be violated before buying a permit.%, by line \ref{line:ifstatement}.

    With these claims proven, we appeal to weak duality to obtain that
    \begin{align*}
        \alg = \sum_{s=1}^T\sum_{k=1}^K x_{k, t} \le K \cdot \sum_{s=1}^T y_s \leq K\cdot \opt.&\qedhere
    \end{align*}
    \end{proof}

\subsection{Deterministic Lower Bound}
In this section, we prove a lower bound on the deterministic competitive ratio of the PPP, which matches our upper bound of $K$. 

\begin{lemma} \label{lem:detalglowerbound}
    The deterministic competitive ratio for the PPP is at least $K$
\end{lemma}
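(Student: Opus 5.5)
The plan is an adversary argument. Fix $\varepsilon>0$ and two large parameters $1\ll M\ll L$. Choose geometrically separated permits
\[
C_k=M^{k-1},\qquad D_k=L^{k-1},\qquad k=1,\dots,K .
\]
The adversary makes day $t$ rainy exactly when the deterministic algorithm holds no valid permit on day $t$, and it stops at a suitably chosen moment. Every rainy day then forces a new purchase, so $\alg$ equals the total cost of the permits the algorithm buys. It remains to show that some offline solution costs at most about $\alg/K$.

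I will prove by induction on $k$ that the adversary has a strategy $\mathcal A_k$ with two properties. First, it runs inside a window of at most $D_k$ fresh days. Second, it ends either when the algorithm buys a permit of type $\ge k$, or when the algorithm has spent about $C_k$ on types $<k$. In both cases the algorithm pays at least $(k-\varepsilon)\cdot \opt_k$, where $\opt_k$ is the cheapest offline cover of the rainy days of that window. The base case $k=1$ is trivial: one rainy day gives $\alg\ge 1=\opt_1$.

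For the step $k-1\to k$, I repeatedly run $\mathcal A_{k-1}$ on consecutive sub-windows. Each run ends with the algorithm either spending roughly $C_{k-1}$ on cheaper permits or buying a type-$(k-1)$ permit, and each run is charged against an offline cost of about $C_{k-1}$. Since $L\gg M$, about $M$ such runs fit into one window of length $D_k$, and a single type-$k$ permit (cost $C_k=M\,C_{k-1}$) covers all of them.

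The step is a ski-rental argument one level up. Let $m$ be the number of completed sub-runs before the algorithm first buys type $k$ or higher.
\begin{itemize}
\item If $m\ge M$, the algorithm has paid at least $(k-1-\varepsilon)\,m\,C_{k-1}$, while one type-$k$ permit costs $C_k=M\,C_{k-1}$.
\item Otherwise it additionally pays $C_k$. Offline can either reuse the level-$(k-1)$ solutions at cost $\approx m\,C_{k-1}$, or buy one type-$k$ permit.
\end{itemize}
In both cases the algorithm pays about $(k-1)\min(mC_{k-1},C_k)+C_k$ in one case and $(k-1)mC_{k-1}$ with $mC_{k-1}\approx C_k$ in the other. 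This is essentially the deterministic ski-rental trade-off shifted by $k-1$ levels, and it yields ratio $k-O(\varepsilon)$ once $M$ is large. Taking $k=K$ proves the lemma after letting $\varepsilon\to 0$.

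The main obstacle is that the problem is not laminar. The algorithm and the offline solution may start permits on arbitrary days, so a sub-window's offline cost need not add up cleanly, and a permit bought near a window boundary could serve two windows. I will handle this by making $L/M$ huge, so that each level-$k$ window contains many level-$(k-1)$ sub-runs. I will also end a sub-run as soon as the algorithm buys a permit of type $\ge k$, and start the next window only after all of the algorithm's permits have expired. Boundary effects then affect at most $O(1)$ sub-runs out of $\Theta(M)$, and I will check that they cost only an $O(1/M)$ relative loss per level, i.e.\ $O(K/M)$ overall.

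A second point to verify carefully is the accounting of the algorithm's partial spending within a sub-run that gets cut off. I would bound it by charging it to the following sub-run's offline cost.
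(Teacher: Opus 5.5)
Your overall architecture is the paper's: rain exactly when the algorithm holds no permit, costs growing geometrically and durations growing much faster, induction on $k$ by concatenating level-$(k-1)$ sub-instances inside one type-$k$ window, and a ski-rental case split. However, the branch in which the algorithm never buys a type-$k$ permit fails as you set it up. Your invariant ends a level-$k$ phase once the algorithm has spent about $C_k$ on types $<k$. Your own computation in that branch then gives $\alg\approx(k-1)mC_{k-1}\approx(k-1)C_k$ against $\opt\le C_k$, which is ratio $k-1$, not $k$. For $k=2$ the invariant is simply false: after about $M$ rainy days on which the algorithm buys type-1 permits, $\alg=M=\opt$. The adversary must not stop early. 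The paper keeps the phase running for the full $D_k$ days unless a type-$k$ permit is bought. Every day of the window then has to be covered, and the cheapest per-day rate among types $<k$ is $C_{k-1}/D_{k-1}$. So the algorithm pays at least $D_kC_{k-1}/D_{k-1}=(L/M)C_k$, while $\opt\le C_k$. This is exactly where $L\gg M$ is needed. Stopping only once the algorithm has spent at least $kC_k$ works equally well, and by the same rate bound this happens inside the window.

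The other branch needs its accounting repaired too. You cannot charge each completed sub-run an offline cost of about $C_{k-1}$. If the algorithm buys a type-$(k-1)$ permit on the first rainy day of a sub-run, that sub-run has offline cost $C_1$. Hence $\alg\ge(k-1-\varepsilon)mC_{k-1}$ does not follow from the induction hypothesis, and for $k\ge 3$ it fails for such an algorithm. Work with $S=\sum_{i\le m}\opt_i$ instead. The hypothesis gives $\alg\ge(k-1-\varepsilon)S+C_k$. Subadditivity of the offline optimum gives $\opt\le\min\{S+C_{k-1},C_k\}$, where the term $C_{k-1}$ pays for the cut-off last sub-run (one type-$(k-1)$ permit covers it). Eliminating $S$ and using $C_k\ge\opt$ gives $\alg+(k-1)C_{k-1}\ge(k-\varepsilon)\opt$. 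Then $\alg\ge C_k=MC_{k-1}$ turns this into $\alg/\opt\ge(k-\varepsilon)/(1+(k-1)/M)$.

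This also settles your two worries. Non-laminarity costs nothing, because only subadditivity of $\opt$ over the sub-windows is used. The algorithm's partial spending in the cut-off sub-run never needs to be charged anywhere, since dropping it only weakens the lower bound on $\alg$; what must be bounded is the offline cost of that sub-run. Your rule of starting the next sub-window only after all of the algorithm's permits have expired is sound, and is in fact more careful than the paper's description of the concatenation.
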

\begin{proof}
We choose $N$ to be some large integer constant. For $1 \leq k \leq K$, we take $C_k = N^{k-1}$ and $D_k = N^{2({k-1})}$. The adversarial sequence is such that a day is rainy exactly when $\alg$ has no valid permit at the start of said day. 
The input of rainy days terminates once a type $K$ permit has been purchased or once $D_K$ days have elapsed, whichever occurs first. 
We assume without loss of generality that $\alg$ only possesses at most one permit at a time. 
Let us define the problem instance with $K$ permits by $I_K$.

We will show by induction on $K$ that for any online algorithm $\alg$, we must have $\alg(I_K) \ge (K-\epsilon_{K,N})\opt(I_K)$, where $\epsilon_{K,N}\to 0$ as $N\to\infty$.  For $K=1$, this is trivial because the online algorithm cannot perform better than the offline optimal algorithm. We move on to the inductive step, showing that if the result holds up to $K$, then it holds for $K+1$ too. We proceed via a case analysis.

\paragraph*{Case 1: $\boldsymbol{\alg}$ does not buy a type $\boldsymbol{K+1}$ permit during $\boldsymbol{I_{K+1}}$}
In this case, $\alg$ will only use permits of type $K$ or less. Observe that due to the adversarial input, the algorithm must have a permit for every day during $I_{K+1}$. The cheapest option per day is the type $K$ permit, as $\frac{C_{k}}{D_{k}} = \frac{N^{k-1}}{N^{2(k-1)}} = \frac{1}{N^{k-1}}$ is strictly decreasing with $k$. As there are $D_{K+1} = N^{2K}$ days, this means that $\alg \geq \frac{1}{N^{K-1}} \cdot N^{2K} = N^{K+1}$. Observe however that $\opt \leq C_{K+1} = N^{K}$, as one strategy for this interval is just to buy the permit of type $K+1$. Hence we have $\alg \geq N^{K+1} \geq N \cdot \opt$. For $N$ large enough, this satisfies the required bound.

\paragraph*{Case 2: $\boldsymbol{\alg}$ does buy a type $\boldsymbol{K+1}$ permit during $\boldsymbol{I_{K+1}}$}

Note that $I_{K+1}$ is just a concatenation of several (complete) $I_K$ instances and a single (possibly incomplete) instance at the end during which $\alg$ buys the type $K+1$ permit. On the prior complete instances, the algorithm's cost is at most $\alg-C_{K+1}$ and the optimal cost is at least $\opt - C_K$, because the final incomplete instance contributes at most $C_K$ to $\opt$'s cost as it could cover this instance with a single permit of type $C_K$. Thus, applying the induction hypothesis to all complete instances yields
\begin{align*}
    \alg - C_{K+1} \ge (K-\epsilon_{K,N})\cdot(\opt-C_K).
\end{align*}

% Suppose that $\alg$ buys the permit of type $K$ on day $t^*+1$. Let $\alg_{\leq t^*}$ denote the cost incurred by the algorithm for times up to $t^*$. Let  $\opt_{\leq t^*}$ denote the cost of an optimal algorithm for an instance which \emph{only} consists of the first $t^*$ days. %By the inductive hypothesis, $\frac{\alg_{\leq t^*}}{\opt_{\leq t^*}} \geq K$, because the time prior to $t^*$ is just repeated $I_K$ instances. 
% By the inductive hypothesis, $\alg_{\leq t^*}\ge (K-\epsilon_{K,N})\cdot(\opt_{\leq t^*} - C_K)$, because the time prior to $t^*$ is just repeated $I_K$ instances and the lower bound of $K-\epsilon_{K,N}$ holds in each of them except possibly the last incomplete one, which contributes at most $C_K$ to the optimal cost. 
% Because $\alg$ buys the type $K+1$ permit at time $t^*+1$, and $\alg_{\leq t^*}$ accounts for all $\alg$'s costs prior to $t^*$ in $I_{K+1}$, we have $\alg \geq \alg_{\leq t^*} + C_{K+1}$. On the other hand, we have $\opt \leq \opt_{\leq t^*} + C_1$, as one strategy is for $\opt$ to use the strategy of $\opt_{\leq t^*}$, and then simply buy the type $1$ permit on day $t^*+1$. Combining these inequalities, we get
% \begin{align*}
%     \alg \ge (K-\epsilon_{K,N})\cdot(\opt - C_1 - C_K) + C_{K+1}.
% \end{align*}

Using $\opt \leq C_{K+1}$ (because one solution to $I_{K+1}$ is to simply buy the type $K+1$ permit immediately) and rearranging, we get
\begin{align*}
    \alg + KC_K\ge (K-\epsilon_{K,N}+1)\cdot\opt.
\end{align*}
Because $\alg \ge C_{K+1} = N\cdot C_K$, the left hand side is at most $\alg\cdot(1+\frac{K}{N})$. Thus,
\begin{align*}
    \frac{\alg}{\opt} \ge \frac{K-\epsilon_{K,N} + 1}{1+\frac{K}{N}}\to K+1\qquad\text{ as $N\to\infty$,}
\end{align*}
which concludes the induction step.
\end{proof}
% Multiplying by $\frac{1}{1+o(1)} = 1-o(1)$, the induction step holds with $\epsilon_{K+1,N} = \epsilon_{K,N}+o(K)$.

% We split Case 2 into two further cases.

% \paragraph*{Case 2i: $\boldsymbol{\mathrm{opt}_{\leq t^*} \leq \frac{C_{K+1}}{K+1} = \frac{N^K}{K+1}}$}

% Intuitively, this case corresponds to $\alg$ buying the permit of type $K+1$ too early. This yields
% \begin{equation}
%     \begin{aligned}
%         \frac{\alg}{\opt} &\geq \frac{\alg_{\leq t^*} + C_{K+1}}{\opt_{\leq t^*} + C_1}\\
%         &\geq \frac{C_{K+1}}{\frac{C_{K+1}}{K+1} + C_1}\\
%         &= \frac{N^K}{\frac{N^{K}}{K+1} + 1}\\
%         &= \frac{K+1}{1 + \frac{K+1}{N^K}}\\
%         &\rightarrow K+1 \text{ as } N \rightarrow \infty
%     \end{aligned}
% \end{equation}

% This thereby upholds the claim for $K+1$.

% \subsubsection*{Case 2ii: $\opt_{\leq t^*} \geq \frac{C_K}{K+1}= \frac{N^{K}}{K+1}$}

% This yields the following set of inequalities

% \begin{equation}
%     \begin{aligned}
%         \frac{\alg}{\opt} &\geq \frac{\alg_{\leq t^*} + C_{K+1}}{\opt}\\
%         &= \frac{\alg_{\leq t^*}}{\opt} + \frac{C_{K+1}}{\opt}\\
%         &\geq \frac{\alg_{\leq t^*}}{\opt_{\leq t^*}+C_1} + \frac{C_{K+1}}{C_{K+1}}\\
%         &= \frac{\alg_{\leq t^*}/\opt_{\leq t^*}}{1+C_1/\opt_{\leq t^*}} + 1\\
%         &\geq \frac{\alg_{\leq t^*}/\opt_{\leq t^*}}{1+\frac{K+1}{N^K}} + 1\\
%         &\rightarrow K+1 \text{ as } N \rightarrow \infty
%     \end{aligned}
% \end{equation}

% thereby finishing the case analysis, and proving that the competitive ratio is at least $K$.

\section{Randomized Algorithms}

In this section, we develop a randomized online algorithm using the primal-dual framework. We also show an improved lower bound on the randomized competitive algorithm, which is within an \emph{additive} constant of our algorithm's competitive ratio. %In particular, we prove the following

\subsection{Randomized Primal-Dual Algorithm} \label{sec:randomizedonlinealgorithm}

In this section, we develop a near-optimal randomized online algorithm, satisfying the guarantee of the following lemma:

\begin{lemma} \label{lem:randomizedupperbound}
    The randomized competitive ratio for the PPP is at most $\ln K + \ln \ln K + O(1)$.
\end{lemma}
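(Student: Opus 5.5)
We prove Lemma~\ref{lem:randomizedupperbound} in two stages. First we give an online algorithm for the Fractional PPP \eqref{program:pppprimalfractional} that is $(\ln K+\ln\ln K+O(1))$-competitive against the dual \eqref{program:dual}. Then we round it online to a randomized integral algorithm with no loss in expectation.

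\textbf{Fractional algorithm.} For a rainy day $t$, write $X_k(t)=\sum_{s=t-D_k+1}^t x_{k,s}$ for the coverage of $t$ by type-$k$ permits, and $X(t)=\sum_k X_k(t)$. While $X(t)<1$, we raise $y_t$ continuously and only ever increase the variables $x_{k,t}$, i.e.\ permits starting today. This is natural: a permit bought today covers the longest possible future and costs the same as one bought earlier. As the introduction suggests, the update is multiplicative in the \emph{sum} $X_k(t)$ rather than in individual variables. Concretely, we set
\[
\frac{d x_{k,t}}{d y_t}=\frac{1}{C_k}\Bigl(X_k(t)+\frac{1}{cK}\Bigr),
\]
with a constant $c$ to be tuned, of order $\ln K$ so that the additive term produces the $\ln\ln K$. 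Each time $x_{k,t}$ increases, $X_k(t)$ increases by the same amount, so $X_k(t)$ grows exponentially in $y_t/C_k$.

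\textbf{Primal--dual accounting.} When $y_t$ rises by $dy$, the primal cost rises by
\[
\sum_k C_k\,dx_{k,t}=\Bigl(X(t)+\tfrac{1}{c}\Bigr)dy\le\Bigl(1+\tfrac1c\Bigr)dy,
\]
so the primal cost is at most $(1+1/c)$ times the dual objective. For approximate dual feasibility, fix a window $[t_0,t_0+D_k-1]$. Every day $s$ in it is covered by any type-$k$ mass bought at times in $[t_0,s]$. Hence the type-$k$ mass $M$ bought inside the window satisfies $dM/dy_s\ge (M+1/(cK))/C_k$ whenever $y_s$ increases there. This gives
\[
M\ge\frac{1}{cK}\Bigl(e^{\sum_{s} y_s/C_k}-1\Bigr).
\]
Since we stop raising once $X(s)\ge 1$, the mass $M$ bought while $X<1$ is at most about $1$, which is the step that needs care. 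This bounds the window's dual load by $C_k\ln(1+cK)\approx C_k(\ln K+\ln c)$. Scaling $\mathbf y$ down by this factor yields a feasible dual. The fractional ratio is then $(1+1/c)(\ln K+\ln c+O(1))$, and choosing $c=\Theta(\ln K)$ gives $\ln K+\ln\ln K+O(1)$.

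\textbf{Main obstacle.} The hardest step is the window argument, namely that $M$ stays $O(1)$. Mass bought on a later day $s'$ that ``covers $s$'' must really start at or before $s$. The potential issue is that $X_k(s)$ in the derivative includes mass bought before $t_0$. I would handle this by restricting the increment to mass inside the window, which only strengthens the lower bound since $X_k(s)$ dominates it. The bound $M\le 1+O(1/c)$ follows because any type-$k$ mass in the window covers the last rainy day it affects, and increments stop once that day's coverage hits $1$. Equivalently, one charges all window mass to coverage of a single day in the window.

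\textbf{Rounding.} I would use a standard threshold rounding: draw, for each type $k$, a uniform $\theta_k\in[0,1)$. The integral algorithm then buys a type-$k$ permit whenever cumulative type-$k$ fractional mass crosses a point of $\theta_k+\mathbb Z$. Feasibility, however, needs coverage from a single type, and the fractional coverage $\sum_k X_k(t)\ge 1$ is spread across types. I would instead use one shared threshold $\theta$ on the combined quantity $X(t)$, buying the longest or cheapest appropriate permit. If that loses a constant factor, the fallback is a $\max$-type rounding comparing $\sum_k$ against $\theta$ over nested prefixes. This only affects the additive $O(1)$ if done losslessly in expectation, and I expect to need the specific structure of the update to guarantee that.
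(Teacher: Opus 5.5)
Your fractional algorithm and its analysis are the paper's. With $c=\ln K$, your rule $dx_{k,t}/dy_t=(X_k(t)+\frac{1}{cK})/C_k$ integrates to the paper's closed-form update $x_{k,t}=\bigl(\frac{1}{K\ln K}+\sum_{s=t-D_k+1}^{t-1}x_{k,s}\bigr)\bigl(e^{y_t/C_k}-1\bigr)$. Your window argument is the paper's induction, and the window mass is at most exactly $1$, because all of it covers the last day in the window on which $y$ was raised. That step is routine, not the main obstacle.

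The genuine gap is the rounding, which you never pin down. The fractional ratio is $\ln K+\ln\ln K+O(1)$, so the rounding may lose at most a factor $1+O(1/\ln K)$. Any constant factor $\beta>1$ gives $\beta\ln K$, so a lossy scheme cannot serve as a fallback.

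Your candidate schemes also fail on feasibility. Per-type thresholds do not coordinate across types, as you note. A single shared threshold on combined cumulative mass fails because the mass valid on a given day is not a contiguous segment of the cumulative-mass line: permits of different durations expire out of purchase order. For example, on day $1$ buy $\frac12$ of a $1$-day permit (segment $[0,\frac12)$) and $\frac12$ of a $10$-day permit (segment $[\frac12,1)$). On days $2$ and $3$ buy $\frac12$ of a $1$-day permit each (segments $[1,\frac32)$ and $[\frac32,2)$). Every rainy day has fractional coverage exactly $1$. Yet for $\theta<\frac12$ the rounded solution buys only the $1$-day permits of days $1$ and $2$, leaving day $3$ uncovered.

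The missing idea is the paper's renewal-style rounding (Algorithm~\ref{alg:reduction}, Lemma~\ref{lem:randomization}). Keep at most one integral permit. On each rainy day $t$ not covered by it, buy a single permit starting at $t$, choosing type $k$ with probability $x_{k,t}/\sum_{k'}x_{k',t}$. Feasibility is automatic.

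Losslessness uses two properties your fractional algorithm already has. First, it only buys permits starting on the current day. Second, every rainy day ends with coverage exactly $1$. Coverage never exceeds $1$, since earlier mass still valid at $t$ also covers the previous purchase day, whose coverage was stopped at $1$. Hence the new mass $\sum_{k'}x_{k',t_j}$ on the $j$th rainy day equals the fractional mass that expired during $(t_{j-1},t_j]$.

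Now assume inductively that $\mathbb{P}[Z_{k,s}=1]=x_{k,s}$ for earlier days. The integral algorithm is uncovered at $t_j$ exactly when its unique current permit expired in that interval. These are disjoint events, so the probability is exactly the expired fractional mass. Multiplying by the conditional choice probability gives $\mathbb{P}[Z_{k,t_j}=1]=x_{k,t_j}$. So the expected integral cost equals the fractional cost, and the fractional ratio transfers with no loss.
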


In Meyerson's \cite{ppporiginal} randomized online algorithm, the most significant deterioration to the competitive ratio occurs due to the assumption of laminarity. In this section, we develop novel algorithmic techniques which show how the non-laminar structure can be handled directly, without factor-losing reductions to the laminar case. 

%We introduce improvements to the randomized rounding of the fractional solutions to the PPP, as well as an improved online algorithm for the fractional PPP itself.

Firstly, we present Algorithm \ref{alg:onlinefractional}, a primal-dual online algorithm to the \emph{fractional} PPP, which we show in Lemma \ref{lem:deterministicfractionalalgorithm} achieves a desirable competitive ratio.  Afterwards, we show that this can be rounded into an integral solution online, without degrading the competitive ratio. 

Algorithm \ref{alg:onlinefractional} draws inspiration from previous primal-dual algorithms \cite{pdsurvey}, such as those for set cover. In many primal-dual algorithms, each primal variable increases as an exponential function of some corresponding dual variables. For problems like set cover, this results in a competitive ratio logarithmic in the maximum number of sets containing any element (corresponding to the row sparsity in general covering problems). However, when considering non-laminar PPP as an instance of set cover, even for fixed $K$ this number is unbounded. Instead, on each day $t$, we orchestrate the increase to variables $x_{k,t}$ instead so that the \emph{summation} $\sum_{t-D_k+1}^t x_{k,t}$ increases exponentially with the dual variables. As shall be shown, this results in a competitive ratio being a function of $K$.

\begin{algorithm}[h]
    \caption{Online Fractional Algorithm for the fractional PPP}\label{alg:onlinefractional}
    \begin{algorithmic}[1]

    \State Initialize all $x_{k, t}$ and $y_t$ to $0$, for $1 \leq k \leq K, 1 \leq t \leq T$.
    \For{each rainy day $t$}
    \While{$\sum_{k=1}^K \sum_{s=t-D_k+1}^t x_{k, s} < 1$} 
        \State Increase $y_t$ uniformly.
        \State For all $k$, $x_{k, t} := \left(\frac{1}{K \ln K} + \sum_{s=t-D_k+1}^{t-1} x_{k,s}\right) \left(\exp\left(\frac{y_t}{C_k}\right) - 1 \right)$ \label{line:updaterule}
    \EndWhile
    \EndFor
    \end{algorithmic}
    \end{algorithm}

    \begin{lemma} \label{lem:deterministicfractionalalgorithm}
        Algorithm \ref{alg:onlinefractional} is $(\ln K+\ln \ln K + 2 + o(1))$-competitive for fractional PPP.
    \end{lemma}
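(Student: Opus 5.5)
The plan is the standard online primal-dual argument: bound the growth rate of the primal cost against the dual objective, then show the dual $\mathbf y$ violates each packing constraint by at most a factor $\beta := \ln(1+K\ln K)$. Weak duality then finishes the proof. Throughout, fix a rainy day $t$ and, for each $k$, write $X_k := \sum_{s=t-D_k+1}^{t-1} x_{k,s}$ for the part of the window already fixed before day $t$. Also write $a_k := \frac{1}{K\ln K} + X_k$, so that line~\ref{line:updaterule} reads $x_{k,t} = a_k(e^{y_t/C_k}-1)$.

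\emph{Step 1 (primal feasibility).} Each $x_{k,t}$ is continuous and increasing in $y_t$ and unbounded. Hence the while loop terminates with $\sum_k\sum_{s=t-D_k+1}^t x_{k,s} = 1$, so every rainy day is covered.

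\emph{Step 2 (primal cost versus dual objective).} Differentiating the primal cost with respect to $y_t$ gives
\[
\sum_k C_k \frac{\partial x_{k,t}}{\partial y_t} = \sum_k a_k e^{y_t/C_k} = \sum_k (a_k + x_{k,t}) = \frac{1}{\ln K} + \sum_k \Bigl(X_k + x_{k,t}\Bigr).
\]
The last sum is exactly the coverage of day $t$, which is less than $1$ while the loop runs. So $dP \le (1+\frac{1}{\ln K})\,dD$ throughout, and integrating gives $P \le (1+\frac{1}{\ln K})D$.

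\emph{Step 3 (approximate dual feasibility).} This is the main step. Fix $k$ and a window $W=[u,u+D_k-1]$. Define $Z(s) := \frac{1}{K\ln K} + \sum_{r\in W,\, r\le s} x_{k,r}$ for $s\in W$. The key observation handles the non-laminar structure: for each $s\in W$, the update window $[s-D_k+1,s-1]$ contains $W\cap[u,s-1]$. Since $x\ge 0$, the coefficient $a_k$ used on day $s$ is at least $Z(s-1)$. Therefore
\[
Z(s) = Z(s-1) + x_{k,s} \ge Z(s-1)\,e^{y_s/C_k}.
\]
Iterating over $W$ gives $Z(\text{last rainy day } s^\ast \in W) \ge \frac{1}{K\ln K}\exp\bigl(\sum_{s\in W} y_s / C_k\bigr)$.

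On the other hand, $\sum_{r\in W,\,r\le s^\ast} x_{k,r}$ is part of the coverage of day $s^\ast$, which is at most $1$ when the loop at $s^\ast$ stops. Hence $Z(s^\ast)\le 1+\frac{1}{K\ln K}$. Combining the two bounds gives $\sum_{s\in W} y_s \le C_k \ln(1+K\ln K) = \beta C_k$. So $\mathbf y/\beta$ is dual feasible, and weak duality yields $D \le \beta\cdot\opt$.

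\emph{Step 4 (conclusion).} Combining Steps 2 and 3,
\[
P \le \Bigl(1+\tfrac{1}{\ln K}\Bigr)\ln(1+K\ln K)\cdot\opt .
\]
Expanding, $\ln(1+K\ln K) = \ln K+\ln\ln K + o(1)$, and multiplying by $1+\frac{1}{\ln K}$ adds $1 + \frac{\ln\ln K}{\ln K} + o(1) = 1+o(1)$. The ratio is therefore $\ln K + \ln\ln K + 1 + o(1)$, which is within the claimed $\ln K+\ln\ln K+2+o(1)$.

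The step needing most care is Step 3. There I would check the monotone-containment argument carefully at the boundary of $W$, and confirm that the coverage bound is taken at the end of the last day's loop, when the coverage is exactly $1$.
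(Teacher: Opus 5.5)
Your proposal is correct and follows the paper's proof essentially step for step: primal feasibility, the same derivative computation giving $P\le(1+\frac{1}{\ln K})D$, and the same induction along a length-$D_k$ window. Your iterated bound $Z(s)\ge Z(s-1)e^{y_s/C_k}$ is exactly the paper's inequality $\sum x_{k,s}\ge\frac{1}{K\ln K}(\exp(\sum y_s/C_k)-1)$ written multiplicatively, and it gives violation at most $\ln(1+K\ln K)$ before weak duality finishes the argument.

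Your final ratio $\ln K+\ln\ln K+1+o(1)$ is slightly sharper than the stated bound, because the paper's last step multiplies by $1+\ln(1+K\ln K)$ where $\ln(1+K\ln K)$ suffices. The one point to tidy is your choice of $s^\ast$. Take it as the last day of $W$ with $y_{s^\ast}>0$, not merely the last rainy day, so that the loop actually ran there and the coverage of $s^\ast$ is exactly $1$.
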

    \begin{proof}

    Let $P$ and $D$ represent the primal and dual objective values throughout the algorithm. We prove this result in 3 steps.

    \begin{enumerate}
        \item \label{step:feasible} Algorithm \ref{alg:onlinefractional} produces a feasible online primal solution $\mathbf x$.
        \item \label{step:disparity} At all times in the algorithm $P \leq (1+\frac{1}{\ln K}) D$
        \item \label{step:dualfeasibility} Algorithm \ref{alg:onlinefractional} produces a dual solution $\mathbf y$ which violates each constraint by a factor no greater than $\ln(1+K \ln K)$.
    \end{enumerate}

    For Claim \ref{step:feasible}, observe that throughout the while loop, $x_{k,t}$ strictly increases, so eventually $\sum_{k=1}^K \sum_{s=t-D_k+1}^t x_{k, s} = 1$ is satisfied. Additionally, non-negativity is satisfied because at $y_t = 0$, $x_{k,t}=0$, so $x_{k,t}=0$ increases continuously and is always non-negative.
    
    Moving to Claim \ref{step:disparity}, observe that this is clearly satisfied initially, as $P=D=0$. We are left with showing that for a rainy day $t$, $\frac{\partial P}{\partial y_t} \leq (1 + \frac{1}{\ln K}) \frac{\partial D}{\partial y_t}$. Observe

    \begin{equation*}
        \begin{aligned}
            \frac{\partial P}{\partial y_t} &= \sum_{k=1}^K C_k \frac{\partial x_{k,t}}{\partial y_t}\\
            &= \sum_{k=1}^K C_k \left(\frac{1}{K \ln K} + \sum_{s=t-D_k+1}^{t-1} x_{k,s}\right) \cdot \frac{1}{C_k}\cdot\exp\left(\frac{y_t}{C_k} \right)\\
            &= \sum_{k=1}^K \left(\frac{1}{K \ln K} + \sum_{s=t-D_k+1}^{t-1} x_{k,s}\right) \cdot \left(\exp\left(\frac{y_t}{C_k} \right)-1+1\right)\\
            &= \sum_{k=1}^K \left( x_{k,t} + \frac{1}{K \ln K} + \sum_{s=t-D_k+1}^{t-1} x_{k,s}\right) \\
            &= \sum_{k=1}^K \left(\frac{1}{K \ln K} + \sum_{s=t-D_k+1}^{t} x_{k,s}\right) \\
            &\leq \frac{1}{\ln K}+1 \\
            &= \left(1 + \frac{1}{\ln K} \right) \cdot \frac{\partial D}{\partial y_t}
        \end{aligned}
    \end{equation*}

    We finally turn to Claim \ref{step:dualfeasibility}. We will show that $\mathbf y$ cannot violate any $(k,t)$-constraint by more than a factor of $\ln(1 +K \ln K)$. Fix $(k,t)$. We seek to show that $\sum_{s=t-D_k+1}^t y_{s} \leq C_k \cdot \ln(1 +K \ln K)$. We claim for all $t'$ with $t-D_k+1 \leq t' \leq t$ that

    \begin{equation} \label{claim:upperboundsum}
        \sum_{s=t-D_k+1}^{t'} x_{k, s} \geq \frac{1}{K \ln K} \left( \exp \left( \frac{\sum_{s=t-D_k+1}^{t'} y_s}{C_k} \right) - 1\right)
    \end{equation}

    We prove this by induction on $t'$. The base case of $t' = t-D_k+1$ is immediate from the definition of $x_{k,t-D_k+1}$ and nonnegativity of $\mathbf x$.
    % evident, as

    % \begin{equation}
    %     \begin{aligned}
    %         \sum_{s=t-D_k+1}^{t'} x_{k, s} &= x_{k, t-D_k+1}\\
    %         &= \left(\frac{1}{K \ln K} + b\right) \left(\exp\left(\frac{y_{t-D_k+1}}{C_k} \right) - 1 \right)\\
    %         &\geq \frac{1}{K \ln K} \left(\exp\left(\frac{y_{t-D_k+1}}{C_k} \right) - 1 \right)\\
    %         &= \frac{1}{K \ln K} \left( \exp \left( \frac{\sum_{s=t-D_k+1}^{t'} y_s}{C_k} \right) - 1\right)
    %     \end{aligned}
    % \end{equation}

    For $t-D_k+1 < t' \leq t$, we use the following inductive argument:

    \begin{equation*}
        \begin{aligned}
            \sum_{s=t-D_k+1}^{t'} x_{k, s} &=  x_{k, t'} + \sum_{s=t-D_k+1}^{t'-1} x_{k, s}\\
            &\geq \left(\frac{1}{K \ln K} + \sum_{s=t-D_k+1}^{t'-1} x_{k,s}\right) \left(\exp\left(\frac{y_{t'}}{C_k} \right) - 1 \right) + \sum_{s=t-D_k+1}^{t'-1} x_{k, s}\\
            &= \frac{1}{K \ln K} \left(\exp\left(\frac{y_{t'}}{C_k} \right) - 1 \right) + \sum_{s=t-D_k+1}^{t'-1} x_{k,s} \exp\left(\frac{y_{t'}}{C_k}  \right)\\
            &\geq \frac{1}{K \ln K} \left(\exp\left(\frac{y_{t'}}{C_k} \right) - 1 \right) \\&+ \frac{1}{K \ln K} \left( \exp \left( \frac{\sum_{s=t-D_k+1}^{t'-1} y_s}{C_k} \right) - 1\right) \exp\left(\frac{y_{t'}}{C_k}  \right)\\
            &= \frac{1}{K \ln K} \left(\exp \left( \frac{\sum_{s=t-D_k+1}^{t'} y_s}{C_k} \right) - 1\right)
        \end{aligned}
    \end{equation*}
    as required. Specializing \eqref{claim:upperboundsum} for $t' = t$, and observing by the algorithm that $\sum_{s=t-D_k+1}^{t} x_{k, s} \leq 1$, we find that $\frac{1}{K \ln K} \left( \exp \left( \frac{\sum_{s=t-D_k+1}^{t} y_s}{C_k} \right) - 1\right) \leq 1$, which can be rearranged into $\sum_{s=t-D_k+1}^t y_{t} \leq C_k \cdot \ln(1+K\ln K)$, as desired.

    From the above claims, the competitive bound follows quickly. From Claim \ref{step:dualfeasibility}, we know that $\hat{\mathbf y}:= \frac{\mathbf y}{\ln(1+K\ln K)}$ is a feasible dual solution. Weak duality implies that the objective value of $\hat{\mathbf y}$ is a lower bound on the optimal primal solution value. Hence, $\frac{1}{\ln(1+K\ln K)} \sum_{t=1}^T y_t \leq \opt$. By Claim~\ref{step:disparity}, we hence know that
    \begin{equation*}
        \alg \leq \left(1+\frac{1}{\ln K}\right) \sum_{t=1}^T y_t \leq \left(1+\frac{1}{\ln K}\right)(1+\ln(1+K\ln K)) \opt
    \end{equation*}

    Then, observing that $\ln(1 + K \ln K) = \ln K + \ln \ln K + o(1)$ we have
    \begin{equation*}
        \begin{aligned}
                \left(1+\frac{1}{\ln K}\right)(1+\ln(1+K\ln K)) &= \left(1+\frac{1}{\ln K}\right)(1+\ln K + \ln \ln K + o(1))\\
                &= 2 + \ln K + \ln \ln K + o(1),
        \end{aligned}
    \end{equation*}
    concluding the proof.
    \end{proof}

Next, we show that there exists an online rounding algorithm for the fractional PPP, which creates a randomized integral solution whose expected cost is no greater than the fractional solution. Paired with Lemma \ref{lem:deterministicfractionalalgorithm}, this immediately proves Lemma \ref{lem:randomizedupperbound}. Meyerson \cite{ppporiginal} proved a similar result, however the reduction was not perfect, as the conversion from a deterministic fractional PPP to randomized PPP lost a factor of 8 in the competitive ratio. 

    \begin{lemma} \label{lem:randomization}
        There exists an online randomized rounding algorithm, which takes as input an online instance $\boldsymbol \sigma$ and an online algorithm for the fractional PPP with cost $\alg_{\textup{frac}}$, and constructs a randomized online integral solution to the PPP with cost $\alg$ such that $\mathbb{E}[\alg] \leq \alg_{\textup{frac}}$.
    \end{lemma}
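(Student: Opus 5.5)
The plan is to round with a \emph{single} uniform random seed $\theta\in[0,1)$, drawn once at the start, and to treat it as a point in a measure space whose subsets are handed out to fractional purchases. Whenever the fractional algorithm sets $x_{k,t}$ to some value, the rounding assigns it a measurable set $A_{k,t}\subseteq[0,1)$ of Lebesgue measure exactly $x_{k,t}$. The integral algorithm buys a type-$k$ permit on day $t$ if and only if $\theta\in A_{k,t}$. By linearity of expectation,
\[
\mathbb E[\alg]=\sum_{t}\sum_k C_k\Pr[\theta\in A_{k,t}]=\sum_t\sum_k C_k x_{k,t}=\alg_{\textup{frac}},
\]
whatever sets are chosen. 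All the work is therefore in choosing the sets so that every rainy day is covered for \emph{every} $\theta$, while keeping the choice online.

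For a day $t$, call the pair $(k,s)$ active at $t$ if $t-D_k+1\le s\le t$, and let $U_t=\bigcup_{(k,s)\text{ active at }t}A_{k,s}$ be the set of seeds for which the integral solution holds a valid permit on day $t$. The key invariant I will maintain is:
\begin{quote}
The sets $A_{k,s}$, ranging over all pairs $(k,s)$ active at $t$, are pairwise disjoint.
\end{quote}
This invariant gives coverage immediately. The measure of $U_t$ equals the fractional coverage $\sum_k\sum_{s=t-D_k+1}^t x_{k,s}$, which is at least $1$ on a rainy day. Since $U_t\subseteq[0,1)$, it must then be all of $[0,1)$, up to a null set that I can absorb by a fixed measure-preserving convention.

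To maintain the invariant online, note that on day $t$ the fractional solution only creates mass on variables $x_{k,t}$. I may assume without loss of generality that $x_{k,s}$ is final once day $s$ ends, and that mass is added only while the coverage of day $t$ is below $1$ (excess mass can be dropped without increasing cost). Each infinitesimal increment of $x_{k,t}$ is then carved out of $[0,1)\setminus U_t$, the seeds that are currently uncovered. This set has measure $1-(\text{current coverage})>0$, which is at least the remaining mass needed, so there is always room.

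The main step to verify is that disjointness persists through time and not just at the moment of creation. Take two pairs $(k,s)$ and $(j,s')$ that are both active at some later day $t$, with $s\le s'$. Because activity windows are contiguous intervals ending at $t$, the pair $(k,s)$ was also active at $s'$. Hence $A_{j,s'}$ was carved from the complement of a set already containing $A_{k,s}$, so the two are disjoint. Ties with $s=s'$ are handled the same way, since increments on day $s$ are carved sequentially.

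This observation that contiguity makes ``active at $t$'' downward closed in time is exactly what replaces the laminarity assumption. I expect it to be the crux: it is why no loss factor, such as Meyerson's $8$, is needed. The remaining point is to remark that the construction is online, since $A_{k,t}$ depends only on the past. Note that this argument only needs the one-time-step-at-a-time structure of the fractional algorithm (which Algorithm~\ref{alg:onlinefractional} satisfies) together with the truncation assumption above.
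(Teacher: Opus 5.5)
Your argument is correct, but it takes a different route from the paper.

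\textbf{The paper's route.} The rounding (Algorithm~\ref{alg:reduction}) uses fresh independent randomness. On each rainy day $t$ that the integral solution leaves uncovered, it buys one permit, choosing type $k$ with probability $x_{k,t}/\sum_{k'}x_{k',t}$. Feasibility is then trivial. The work is an induction over rainy days $t_1<t_2<\dots$ proving $\mathbb E[Z_{k,t_j}]=x_{k,t_j}$. Since the integral solution holds at most one permit at a time, being uncovered on $t_j$ is the disjoint union of the events that an earlier permit expires in $(t_{j-1},t_j]$. By induction this has probability equal to the fractional mass expiring there. That mass equals $\sum_{k'}x_{k',t_j}$ because coverage is exactly $1$ on both $t_{j-1}$ and $t_j$.

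\textbf{Your route.} Your single-seed coupling shifts the burden the other way. The marginals $\Pr[\theta\in A_{k,t}]=x_{k,t}$ hold by construction, and the work becomes a deterministic disjointness invariant that guarantees coverage for every seed.

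\textbf{Comparison.} Both proofs use the same normalizations: day-$t$ mass goes only to $x_{k,t}$, and rainy-day coverage is exactly $1$. Both also use the same structural fact, that permit windows are intervals starting at the purchase day. In the paper this enters through the expiring-mass identity; in yours, through the downward-closedness of ``active at $t$''. Your version gives pointwise feasibility, uses a single random number, and needs no probabilistic induction. The paper's version is a memoryless sampler that never maintains a partition of $[0,1)$.

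\textbf{Two small points of rigor.}
\begin{enumerate}
\item Rather than carving ``infinitesimal increments'', allocate each day's final values $x_{k,t}$ as finite unions of half-open intervals that partition $[0,1)\setminus U_t$. Then $U_t=[0,1)$ exactly on rainy days, and the null-set caveat disappears.
\item Merely dropping excess mass does not justify the truncation. A later day may be covered in the original only by the dropped mass. The standard fix is a lazy simulation: on a deficient rainy day, it re-buys, starting that day, not-yet-used portions of the original's currently active purchases. This costs no more, because each unit of original mass is reused at most once. It always suffices, because any already-reused active mass still covers the current day. The paper likewise asserts its assumptions without proof, so this is not a gap relative to it.
\end{enumerate}
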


    \begin{proof}
    Fix an online instance $\boldsymbol \sigma$, and suppose we have an online algorithm $\alg_{\textup{frac}}$ for the fractional PPP. We assume that \textbf{(\refstepcounter{assump}\label{assump:rainyinc}\theassump)}  on day $t$, $\alg_{\textup{frac}}$ may only increase the variables $x_{k,t}$, and that \textbf{(\refstepcounter{assump}\label{assump:massone}\theassump)} rainy days are covered by a total mass of 1 permit. These assumptions are without loss of generality, as any $\alg_{\textup{frac}}$ can be converted into an algorithm satisfying this without increasing the objective value. %, in polynomial time. 
    We claim that Algorithm \ref{alg:reduction} is then a randomized integral solution with $\mathbb E[\alg] \leq \alg_{\textup{frac}}$.

\begin{algorithm}[t] % 't' is usually preferred over 'h!' in LIPIcs
    \caption{Deterministic fractional PPP to randomized integral PPP conversion}\label{alg:reduction}
    \begin{algorithmic}[1]
        \Require Fractional online algorithm $\alg_{\textup{frac}}$ satisfying assumptions \eqref{assump:rainyinc} and \eqref{assump:massone}.
        
        \State Initiate online fractional solution $\mathbf x \gets \mathbf 0$
        
        \For{each rainy day $t$ not covered by any permit} \label{line:rainypermitlessdays}
            \State Use $\alg_{\textup{frac}}$ to compute fractional online solution $\mathbf x$ up to day $t$.
            \State Purchase exactly one of the permits at random, where permit $k$ is chosen with probability $\frac{x_{k,t}}{\sum_{k'} x_{k', t}}$.
        \EndFor
    \end{algorithmic}
\end{algorithm}

Firstly, observe that the algorithm always produces a feasible integral solution, as every day which begins without a permit will have a new permit purchased. Next, let $Z_{k,t}$ be a random indicator variable which indicates whether the permit of type k is purchased at time $t$ by the algorithm. Then, we have that $\alg = \sum_{k} C_k \sum_{t=1}^T Z_{k,t}$. To prove that $\mathbb E [\alg] \leq \alg_{\textup{frac}}$, it therefore suffices to show that $\mathbb E[Z_{k,t}] \leq x_{k,t}$ for all $k$ and $t$ by linearity of expectation. The inequality is obvious for days $t$ with $\sigma_t = 0$, because in that case both sides of the equation equal 0. Focusing instead on days with $\sigma_t = 1$, let $t_j$ denote the $j$th day with $\sigma_t = 1$. We prove by induction on $j$ that $\mathbb E[Z_{k, t_j}] = x_{k,t_j}$. %Let $X_{k} := \sum_{t=1}^T X_i^t$, which indicates whether permit $i$ is ever purchased. Then this will immediately also prove that $\mathbb E[X_{k}] \leq \sum_{s=t-D_k+1}^t x_{k, s}$ by disjoint additivity.

    For the base case, consider $j=1$. Observe that $\sum_{k} x_{k, t_1} = 1$ as $x_{k, t} = 0$ for times $t < t_1$, and day $t_1$ must be covered on day $t_1$ for the fractional solution to be feasible. Therefore, on day $t_1$, the algorithm purchases permit $k$ with probability $\frac{x_{k,t_1}}{\sum_{k'}x_{k', t_1}} = x_{k,t_1}$. This means $\mathbb E[Z_{k, t_1}] = x_{k, t_1}$, thereby demonstrating the base case.

    For the inductive step, fix $k$ and $j$, and assume the claim holds for all $j' < j$ and for all $k$. In order for $t_j$ to be a rainy day where $Z_{k,t_j}$ even has the possibility of being non-zero, it must be that $\alg$ purchased a permit which became invalid in $(t_{j-1}, t_j]$. Let $I_j$ be the event that $\alg$ owned a permit which became invalid at time $(t_{j-1}, t_j]$. The necessity of $I_j$ for $Z_{k, t_j} = 1$ implies 
    \begin{equation}
    \mathbb{E}[Z_{k, t_j}] = \mathbb{P}[Z_{k, t_j} = 1] = \mathbb{P}[Z_{k, t_j} = 1 \cap I_j] = \mathbb{P}[Z_{k, t_j} = 1 | I_j] \cdot \mathbb P[I_j]. \end{equation}

    Now, $\mathbb{P}[Z_{k, t_j} = 1 | I_j] = \frac{x_{k, t_j}}{\sum_{k'}x_{k', t_j}}$ by definition of Algorithm \ref{alg:reduction}. On the other hand, by additivity of disjoint events and the inductive hypothesis, we get
    \begin{equation}
        \mathbb P[I_j] = \sum_{k=1}^K \sum_{t_{j-1} < s \leq t_j} \mathbb E[Z_{k, s-D_{k}}] = \sum_{k=1}^K \sum_{t_{j-1} < s \leq t_j} x_{k, s-D_{k}}. 
    \end{equation}
    However, we have that $\sum_{k=1}^K \sum_{t_{j-1} < s \leq t_j} x_{k, s-D_{k}} = \sum_{k'}x_{k',t_j}$, because the mass of permits which become invalid in $(t_{j-1}, t_j]$ must be made up by new mass to permits at time $t_j$. Therefore, we have $ \mathbb{E}[Z_{k, t_j}] = \frac{x_{k, t_j}}{\sum_{k'}x_{k', t_j}} \cdot \sum_{k'}x_{k', t_j} = x_{k, t_j}$, which finishes the inductive step and hence the proof.
    \end{proof}

\subsection{Randomized Lower bound}

We now provide a lower bound to the fractional PPP, matching the upper bound guarantee of Algorithm \ref{alg:onlinefractional} up to an additive constant. Since any randomized algorithm can be turned into a fractional algorithm with the same competitive ratio, by defining $x_{k,t}$ as the probability of purchasing permit $k$ at time $t$, this implies the lower bound of Theorem~\ref{thm:randLB} for randomized algorithms, 

We begin by showing that it suffices to lower bound the laminar version of the problem. Recall that in Laminar PPP, we require that $D_k/D_{k-1}$ is an integer for all $k=2,\dots,K$, and any permit of type $k$ expires on the next day that is a multiple of $D_k$. The linear program for the laminar version is the same as~\eqref{program:pppprimalfractional} except that variables $x_{k,t}$ exist only for $t=1+(i-1)D_k$ for some integer $i$ (i.e., all other $x_{k,t}$ are set to $0$).

\begin{lemma} \label{lem:laminarsufficient}
   For fixed $K$, the competitive ratio for online fractional PPP is at least the competitive ratio for online fractional \emph{Laminar} PPP.
\end{lemma}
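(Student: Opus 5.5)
The plan is a reduction. Take any online algorithm $A$ for fractional (general) PPP with competitive ratio $c$. From it we build an online algorithm $B$ for fractional Laminar PPP with the same $K$ and the same costs $C_k$, and show $B$ is $c$-competitive. To do this, we embed every laminar instance into a general PPP instance, padding the timeline with dry days. The padding ensures that no general permit can usefully straddle a laminar block boundary, which is exactly where a naive embedding would lose a factor of $2$.

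\emph{Construction.} Let the laminar instance have durations $D_1\mid D_2\mid\dots\mid D_K$, and write $m_k:=D_k/D_{k-1}$. We define general durations $E_k$ and a map from laminar days to general days recursively.
\begin{itemize}
\item A general level-$1$ block is a single day, and $E_1:=1$ (so $D_1$ laminar days correspond to $D_1$ level-$1$ blocks; if $D_1>1$, first absorb $D_1$ into this base level the same way).
\item A general level-$k$ block consists of $m_k$ consecutive level-$(k-1)$ blocks, each followed by $E_{k-1}$ dry padding days.
\item $E_k$ is the total length of a level-$k$ block, so $E_k=2m_kE_{k-1}$.
\end{itemize}
The general instance uses costs $C_k$ and durations $E_k$. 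Each laminar day is mapped to its position inside the nested blocks, it keeps its weather, and all padding days are dry.

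\emph{Key steps.} First, $\opt_{\mathrm{gen}}\le\opt_{\mathrm{lam}}$. A laminar type-$k$ permit on a level-$k$ block is replaced by a general type-$k$ permit bought at the start of the corresponding general block; it has length $E_k$, so it covers every image day of that block.

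Second, the separation property. Consider any window of length $E_k$, the coverage of a general type-$k$ permit. Its rainy days all lie in the image of a single laminar level-$k$ block. The reason is that consecutive level-$k$ blocks are separated by at least $E_k$ dry padding days, because the padding after a level-$k$ block inside a level-$(k+1)$ block has length $E_k$. A window of length $E_k$ cannot contain days on both sides of such a gap. One must check that the boundaries of higher-level blocks only add more separation: the last level-$k$ block of a level-$(k+1)$ block is also followed by $E_k$ padding.

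Third, the simulation. $B$ runs $A$ on the general instance, feeding it the image of each laminar day online. The dry padding days are fed in as well; they can be revealed as soon as the next laminar day arrives, since their weather is fixed. Whenever $A$ raises $x_{k,s}$ by some amount, $B$ raises, by the same amount, its laminar type-$k$ variable for the unique level-$k$ block whose rainy images the window $[s,s+E_k-1]$ can meet. If the window meets no rainy image days, $B$ picks an arbitrary block or does nothing. By the separation property, every rainy laminar day receives at least as much laminar coverage from $B$ as its image receives from $A$, so $B$ is feasible. Moreover $B$ pays no more than $A$, and $B$ is online because the relevant block is determined at time $s$.

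Combining the steps gives $B\le A\le c\cdot\opt_{\mathrm{gen}}\le c\cdot\opt_{\mathrm{lam}}$, which proves the lemma.

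I expect the main obstacle to be the separation argument across all levels at once. It must hold both for windows that cross boundaries of higher-level blocks and for windows near the start or end of the instance. The base-level handling when $D_1>1$ also needs care. If the recursion above has an off-by-one issue, the fallback is to choose the padding after each level-$k$ block to be $E_k$, or any larger length. Larger padding only increases $E_{k+1}$ and can only strengthen the separation, while feasibility of the embedded laminar optimum is unaffected.
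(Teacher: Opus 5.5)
Your proposal is correct and is essentially the paper's proof: the paper's choice $D_k' = 2^{k-1}D_k$, with day $f(i_1,\dots,i_K)$ sent to $f'(2i_1,\dots,2i_K)$, is exactly your padded embedding with $E_k = 2m_kE_{k-1}$ (padding before rather than after each sub-block). It is followed by the same $\opt'\le\opt$ argument and the same separation-based charging of type-$k$ mass to the non-laminar algorithm's purchases (the paper matches coverage lazily on rainy days rather than mapping each purchase eagerly, which makes no real difference). One detail to make explicit: your recursion only pads between level-$j$ blocks for $j<K$, so you must also insert $E_K$ dry days between consecutive top-level blocks (the paper does this via $i_K\mapsto 2i_K$); otherwise a general type-$K$ window can straddle two level-$K$ blocks, though your stated fallback of padding $E_k$ after every level-$k$ block already covers this.
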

\begin{proof}
    Fix $K$. Suppose there is an $\alpha$-competitive algorithm $\alg'$ for the online fractional (non-laminar) PPP. We will show that there is an $\alpha$-competitive algorithm $\alg$ for Laminar PPP.

    Consider an instance for Laminar PPP with durations $D_1,\dots,D_K$, costs $C_1,\dots,C_K$ and an online sequence $\boldsymbol \sigma$. We may assume without loss of generality that $D_1 = 1$. This implies that each day $t$ can be uniquely specified by a tuple $(i_1, \dots, i_K)$ with $1 \leq i_k \leq D_{k+1} / D_k$ for $k < K$, such that day $t$ is covered by the $i_K$th type $K$ permit, and covered by the $i_{K-1}$th permit among the $D_{K}/D_{K-1}$ permits of type $K-1$ overlapping this type $K$ permit, and so on. Formally, this mapping from tuple $(i_1,\dots,i_K)$ to day $t$ is given by
    \begin{align*}
        f(i_1,\dots,i_K) = 1 + \sum_{k=1}^K (i_k - 1)D_k.
    \end{align*}

    Algorithm $\alg$ will work by simulating $\alg'$ on a modified non-laminar instance with durations $D_{k}'$, costs $C_k'$ and an online sequence $\boldsymbol \sigma'$. We will define $D'_k = 2^{k-1} D_k$, so that an interval of type $k$ can keep twice as many intervals of type $k-1$ as it originally could.
    %We define $\mathcal I'(i_1, ..., i_K)$ analogously to $\mathcal I(i_1, ..., i_K)$, but using $\mathcal I'$ to define the durations of permits. 
    The costs $C_k'$ are kept identical to $C_k$. An online sequence $\boldsymbol \sigma$ for the original laminar instance is modified into an online sequence $\boldsymbol \sigma'$ as follows: if $t=f(i_1, \dots, i_K)$ is a rainy day in $\boldsymbol \sigma$, then the day $t'$ defined by $t' = f'(2i_1, \dots, 2i_K)$ is rainy in $\boldsymbol \sigma'$, where $f'$ is defined like $f$ but with $D_k$ replaced by $D_k'$. Let $\opt$ and $\opt'$ be the optimal solution values for $\boldsymbol \sigma$ and $\boldsymbol \sigma'$, respectively. One may immediately observe that $\opt' \leq \opt$, as we can convert a laminar solution for $\boldsymbol \sigma$ into a laminar solution for $\boldsymbol \sigma'$ without increasing the cost, and the optimal non-laminar solution for $\boldsymbol \sigma'$ is at least this good.
   
    We now wish to show $\alg \leq \alg'$, where $\alg$ and $\alg'$ denote the costs of the online algorithms' solutions for $\boldsymbol \sigma$ and $\boldsymbol \sigma'$, respectively. Algorithm $\alg$ works by simulating $\alg'$ on $\boldsymbol \sigma'$, where on a rainy day $t$, $\alg$ increases the weight of the permit of each type $k$ covering $t$ so that the mass of the type $k$ permit covering $t$ equals the total mass of type $k$ permits covering $t'$ by $\alg'$. Observe that $\alg$ is clearly feasible.

    We charge the cost of purchases by $\alg$ to the corresponding purchases by $\alg'$ that triggered them. To show that the cost of $\alg$ is at most that of $\alg'$, it suffices to show that if two days $t_1 \ne t_2$ are covered by $\alg$ using separate permits of type $k$, then they charge to distinct purchases by $\alg'$. Indeed, if $t_1=f(i_1,\dots,i_K)$ and $t_2 = f(j_1,\dots,j_K)$, then the corresponding days $t_1' = f'(2i_1,\dots,2i_K)$ and $t_2' = f'(2j_1,\dots,2j_K)$ are a distance of at least $D_k'$ apart. Thus, $\alg'$ cannot cover them with a single permit of type $k$.
    
    % We may assume without loss of generality that $\alg'$ only ever makes (fractional) purchases at the moment a day $t'$ is revealed to be rainy, and only purchases those permits starting from $t'$. We will argue that the cost of $\alg$ on $\boldsymbol \sigma$ is at most the cost of $\alg'$ on $\boldsymbol \sigma'$. To see this, observe that any cost incurred by $\alg$ on $t$ due to a fractional purchase of permit $k$ can be associated with an equal cost incurred by $\alg'$ due to a fractional purchase of a type $k$ permit starting from $t'$. Using a charging argument, charge this cost of $\alg$ to $\alg'$. Observe that the charges against $\alg'$ will be unique: this is because whenever we have two days $t_1 = \mathcal I(i_1, ..., i_K)$ and $t_2 = \mathcal I(i_1, ..., i_K)$ covered by separate type $k$ permits in the laminar permit set, it must be that $t_1' = \mathcal I(2i_1, ..., 2i_K)$ and $t_2' = \mathcal I(2i_1, ..., 2i_K)$ are a distance of at least $D_k$ apart. Therefore it is impossible for a single fractional purchase by $\alg'$ to be associate with several fractional purchases by $\alg$.
        
    Hence, we have $\frac{\alg}{\opt} \leq \frac{\textsc{alg'}}{\textsc{opt'}}$ which proves the lemma.
\end{proof}

In Lemma \ref{lem:randomizedlowerbound}, we show that the randomized competitive ratio is lower bounded by $S_K$ defined by the recurrence $S_1 := 1$ and $S_{K+1} := S_K + \frac{S_K}{\exp(S_K)-1}$. We remark that $S_2 = \frac{e}{e-1}$ is the known optimal randomized competitive ratio for ski rental~\cite{KarlinMMO94}. In fact, our lower bound is obtained by combining ski rental lower bounds recursively. We prove in Lemma \ref{lem:recurrenceproperty} that $S_K$ asymptotically matches the upper bound of Lemma \ref{lem:deterministicfractionalalgorithm} up to additive constants. We leave its proof to Appendix \ref{app:omittedproofs}.

\begin{restatable}{lemma}{lemrecurrenceproperty} \label{lem:recurrenceproperty}
    Let $S_K$ be defined by $S_1 := 1$ and $S_{K+1} := S_K + \frac{S_K}{\exp(S_K)-1}$. Then $S_K = \ln K + \ln \ln K + o(1)$ as $K \rightarrow \infty$.
\end{restatable}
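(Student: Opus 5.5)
We compare the recurrence to the ODE $s'(x)=g(s)$ with $g(s):=\frac{s}{e^s-1}$. Since $S_{K+1}-S_K=g(S_K)>0$, the sequence increases. It also tends to infinity: if it were bounded by some $M$, every increment would be at least $g(M)>0$, a contradiction. The asymptotics will come from the potential
\[
\Phi(s):=\int_1^s\frac{du}{g(u)}=\int_1^s\frac{e^u-1}{u}\,du .
\]
Along the ODE, $\Phi$ increases at exactly rate $1$.

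\textbf{Step 1: increments of $\Phi$.} We show $\Phi(S_{K+1})-\Phi(S_K)=1+o(1)$. Put $h=g(S_K)$. Then
\[
\Phi(S_{K+1})-\Phi(S_K)=\int_{S_K}^{S_K+h}\frac{du}{g(u)} .
\]
The function $1/g$ is increasing. For $u\in[S_K,S_K+h]$ we have $\frac{1}{g(u)}\le\frac{1}{g(S_K)}\,e^{h}\,(1+o(1))$, because $\frac{e^u-1}{u}$ changes by at most a factor $e^{h}$ over an interval of length $h$. Hence the increment lies between $1$ and $e^{h}(1+o(1))$. Since $S_K\to\infty$, we get $h\to0$, so the increment is $1+o(1)$.

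Summing, by Cesàro, $\Phi(S_K)=K(1+o(1))$. To pin down the additive $o(1)$ in $S_K$, however, we need a stronger statement, and this is the main obstacle. Note that $\Phi(s)=\frac{e^s}{s}\bigl(1+O(1/s)\bigr)$. Solving $\Phi(S_K)=K(1+\varepsilon_K)$ gives $S_K-\ln S_K=\ln K+\ln(1+\varepsilon_K)+O(1/S_K)$. So a \emph{multiplicative} error $1+o(1)$ in $\Phi(S_K)$ costs only an additive $o(1)$ in $S_K$. Therefore Cesàro summation of increments $1+o(1)$ already suffices, and no sharp second-order estimate is needed.

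\textbf{Step 2: invert.} From $S_K-\ln S_K=\ln K+o(1)$ we first get $S_K=\ln K\,(1+o(1))$. Hence
\[
\ln S_K=\ln\ln K+\ln(1+o(1))=\ln\ln K+o(1).
\]
Substituting back yields $S_K=\ln K+\ln\ln K+o(1)$.

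\textbf{Supporting estimate.} It remains to verify $\Phi(s)=\frac{e^s}{s}(1+O(1/s))$. Integrate by parts: $\int_1^s\frac{e^u}{u}\,du=\frac{e^s}{s}+\int_1^s\frac{e^u}{u^2}\,du+O(1)$. The remaining integral is $O(e^s/s^2)$, which follows by splitting it at $s/2$. Finally, the term $\int_1^s\frac{du}{u}=\ln s$ is negligible compared with $e^s/s$.
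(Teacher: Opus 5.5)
Your proposal is correct and follows essentially the paper's route: both use the potential $\Phi(s)=\int_1^s\frac{e^u-1}{u}\,du$, show $\Phi(S_K)\sim K$, combine this with $\Phi(s)\sim e^s/s$, and invert the logarithm twice. The differences are cosmetic. You bound each increment between $1$ and $e^{h}(1+o(1))$ by monotonicity of $1/g$ and then apply Cesàro, where the paper uses a Taylor estimate $1+O(h(S_K))$ and telescopes $\sum_j h(S_j)=S_K-S_1=o(K)$. You also get $\Phi(s)\sim e^s/s$ by integration by parts rather than by l'H\^opital.
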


\begin{lemma}  \label{lem:randomizedlowerbound}
    Any online algorithm for the fractional online PPP with $K$ permits has a competitive ratio of at least $S_K$, where $S_1 := 1$ and $S_{K+1} := S_K + \frac{S_K}{\exp(S_K)-1}$.
\end{lemma}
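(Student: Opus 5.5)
By Lemma~\ref{lem:laminarsufficient}, it suffices to prove the bound for online fractional \emph{Laminar} PPP. I would argue by induction on $K$, in the style of Lemma~\ref{lem:detalglowerbound}. Take $C_k=N^{k-1}$ and laminar durations $D_k$ growing fast enough (say $D_k = M^{k-1}$ with $M\gg N$) that a type-$(k+1)$ block contains arbitrarily many type-$k$ blocks. The inductive claim is that for every $K$ there is an adaptive adversary instance $I_K$, living inside one type-$K$ block, with $\alg(I_K)\ge (S_K-\epsilon_{K,N})\opt(I_K)$ and $\epsilon_{K,N}\to 0$ as $N\to\infty$. The base case $K=1$ is trivial. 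One point is worth recording now: $S_{K+1} = S_K + \frac{S_K}{e^{S_K}-1} = \frac{S_K}{1-e^{-S_K}}$, which is exactly the optimal fractional competitive ratio of a ski rental variant in which online renting is $S_K$ times more expensive than offline renting.

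\textbf{Inductive step.} Inside one type-$(K+1)$ block, the adversary plays consecutive copies of $I_K$, one per type-$K$ sub-block, and stops at a time of its choosing. Let $z(t)\in[0,1]$ be the fractional mass of the type-$(K+1)$ permit bought so far; it is nondecreasing. On each sub-block the algorithm must cover the residual demand $1-z$ using types $\le K$.

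The key observation is that the $K$-permit LP is homogeneous. So while $z$ stays fixed during a sub-block, the restriction of the algorithm to types $\le K$, rescaled by $1/(1-z)$, is a fractional algorithm for $I_K$. By induction it pays at least $(1-z)(S_K-\epsilon)\opt_{\mathrm{sub}}$. Measure "time" $m$ as the cumulative offline cost $\sum\opt_{\mathrm{sub}}$ over completed sub-blocks; each $\opt_{\mathrm{sub}}\le C_K = C_{K+1}/N$, so $m$ advances in steps of size $O(1/N)$ after normalising $C_{K+1}=1$. Then, up to $o(1)$ errors:
\begin{itemize}
\item the online cost up to time $m$ is at least $z(m)+S_K\int_0^m (1-z(u))\,du$;
\item the offline cost is at most $\min(m,1)$, since one can either follow the sub-block optima or buy the single type-$(K+1)$ permit.
\end{itemize}
This is precisely continuous fractional ski rental with online rent rate $S_K$, offline rent rate $1$, and buy price $1$.

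\textbf{Ski rental computation.} I would show that any nondecreasing $z$ has some stopping time $m\in(0,1]$ with cost at least $\frac{S_K}{1-e^{-S_K}}\min(m,1)$. Suppose instead that $z(m)+S_K\int_0^m(1-z)\le c\,m$ for all $m\le 1$, where $c<\frac{S_K}{1-e^{-S_K}}$. Then a Gr\"onwall/comparison argument bounds $z$ by the solution of $z'+S_K(1-z)=c$ with $z(0)=0$, namely
\[
z(m)=\Bigl(1-\frac{S_K}{c}\Bigr)\bigl(e^{S_K m}-1\bigr).
\]
Alternatively, one can integrate the inequality against the density $e^{-S_K m}$. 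Either way, evaluating at $m=1$ (or checking that $z$ must exceed $1$ by then) yields the contradiction $c\ge S_K/(1-e^{-S_K}) = S_{K+1}$.

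\textbf{Main obstacle.} The delicate step is the rescaling argument when $z$ increases \emph{within} a sub-block, since then the residual demand is not constant there. I would handle it as follows. Because sub-blocks are $O(1/N)$-short in offline-cost time, the total increase of $z$ is at most $1$. So sub-blocks in which $z$ rises by more than $\delta$ number at most $1/\delta$, and they contribute only $O(1/(\delta N))$ offline cost. On every other sub-block, apply the inductive bound with $z$ replaced by its end-of-sub-block value, which loses only a factor $(1-\delta)$. Letting $N\to\infty$ and then $\delta\to 0$ gives $\epsilon_{K+1,N}\to 0$. The adversary's choice of stopping time is online and adaptive, which is legitimate because the algorithm is deterministic in the fractional setting.
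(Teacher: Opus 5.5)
Your overall architecture matches the paper's: reduction to the laminar case, a nested adaptive adversary, rescaling the residual demand $1-z$ by homogeneity, and a continuous ski-rental problem with online rent rate $S_K$. The gap is in the ski-rental step, which is where the constant $S_{K+1}$ comes from. Your claim that every nondecreasing $z$ has a stopping time $m\in(0,1]$ with cost at least $S_{K+1}m$ is false. For $z\equiv 0$ your cost functional equals $S_K m<S_{K+1}m$ on all of $(0,1]$, so the constraints with $m\le 1$ alone can never give a contradiction. Your comparison function is also wrong. The solution of $z'+S_K(1-z)=c$, $z(0)=0$ is $z^*(m)=\bigl(\frac{c}{S_K}-1\bigr)(e^{S_K m}-1)$, not $\bigl(1-\frac{S_K}{c}\bigr)(e^{S_K m}-1)$. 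With the correct formula, $z^*(1)<1$ whenever $c<S_{K+1}$, so $z$ is never forced past $1$. Your formula combined with the test $z(1)\ge 1$ would even give the wrong threshold $S_K(e^{S_K}-1)/(e^{S_K}-2)$.

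The missing ingredient is the end-of-game term, which is exactly what the paper's objective \eqref{eq:contObjective} encodes. The adversary must keep going past offline time $1$; there is room for this if $M$ is large relative to $N$, e.g.\ $M\gg N^{K}$ sub-blocks. Then the algorithm either finishes buying the type-$(K+1)$ permit or keeps paying rent, so its final cost is at least $1+S_K\int_0^1(1-z)$ (up to $o(1)$), while $\opt\le 1$. With this the argument closes. Suppose $z(m)+S_K\int_0^m(1-z)\le S_{K+1}m$ on $[0,1]$. Applying Gr\"onwall to $F(m)=\int_0^m(1-z)$ gives $F(1)\ge\int_0^1(1-z^*)$, where $z^*(m)=\frac{e^{S_K m}-1}{e^{S_K}-1}$; this is the paper's $x_K$ and has $z^*(1)=1$. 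Hence the final cost is at least $1+S_K\int_0^1(1-z^*)=S_{K+1}$. Your $e^{-S_K m}$-weighting alternative also works, but only once you add the end-of-game inequality $1+S_K F(1)\le c$; the two together yield $S_K\le c(1-e^{-S_K})$.

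On the within-sub-block issue you are more careful than the paper, which simply inserts the end-of-sub-block value $x_K(t_i)$. However, rescaling by the end-of-sub-block value is not an online operation, so the inductive hypothesis does not apply to it directly. Rescale instead by $1/(1-z_{\mathrm{start}}-\delta)$, which is known online, and patch feasibility if $z$ rises further. This loses an additive $O(\delta S_K)$ over offline time $[0,1]$, rather than a factor $1-\delta$.
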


\begin{proof}
    By Lemma \ref{lem:laminarsufficient} we may prove the lemma for the Laminar PPP. We proceed by induction on $K$. The base case $K=1$ is obvious. Now considering $K \geq 2$, assume that for $K-1$ it is known that we have an adversarial strategy so that the competitive ratio is at least $S_{K-1}$. Let $C_k = N^k$ and $D_k = M^k$, where $M$ and $N$ are large integers and $M > N$. In particular, any type $K$ interval consists of $M$ disjoint type $K-1$ intervals. We construct an instsance of length $T = D_K$, so that the entire problem instance consists of only one type $K$ interval.

    We define our adversarial instance for $K$ as follows. We iteratively apply the $S_{K-1}$ adversarial case (assumed to exist by the inductive hypothesis) to the $i$th type $K-1$ interval, for $i=1, 2, \dots$. Let $\alg_i$ and $\opt_i$ denote cost of the online algorithm $\alg$ on the $i$th type $K-1$ interval, and $\opt_i$ the optimal offline cost on that interval. 

    If after the $m$th type $K-1$ interval we have $\sum_{i=1}^m\alg_i \geq S_K \min \{ \sum_{i=1}^m\opt_i,  C_K\}$, then we stop the problem instance, and observe that the algorithm on this instance would have a competitive ratio of $S_K$, as $\opt \leq \min \{ \sum_{i=1}^m\opt_i,  C_K\}$ in that case. As this would finish the proof, instead suppose that we have
    \begin{equation}
        \sum_{i=1}^m\alg_i \leq S_K \min \left\{ \sum_{i=1}^m\opt_i,  C_K\right\}
    \end{equation}
    at all times. Let $t_i = i\cdot D_{K-1}$ denote the final time step in the $i$th type $K-1$ permit. Also let $x_K(t_i)$ denote the mass of permit $K$ which has been purchased at time $t_i$. Now, observe that the cost of the algorithm at time $t_m$ is bounded by

    \begin{equation} \label{discretebounds}
        C_K x_K(t_m) + \sum_{i=1}^m S_{K-1}  \cdot \opt_i (1-x_K(t_i)) \leq \sum_{i=1}^m\alg_i \leq S_K\min \left\{ \sum_{i=1}^m\opt_i,  C_K\right\}
    \end{equation}

    We will use inequality \eqref{discretebounds} to bound each $x_K(t_i)$ from above and thereby arrive at a lower bound on the cost of $\alg$. By making $M\gg N\gg 1$ arbitrarily large, meaning each type $K-1$ interval is a vanishingly narrow part of the type $K$ permit, this PPP adversarial example becomes arbitrarily close to a continuous problem. Moreover, the minimum on the right-hand side will be $C_K$, and $x_K(t_M)$ will tend to $1$ as otherwise the left-hand side would grow unbounded. By letting $y$ be the continuous analog of $\sum_{i=1}^m \opt_i$, the minimum possible cost of $\alg$ is thus lower bounded by the solution to the following optimization problem:
\begin{equation}
    \min_{x_K} \quad C_K + S_{K-1} \int_0^{C_K} (1 - x_K(y))\, dy\label{eq:contObjective}
\end{equation}
subject to
\begin{equation}
    C_K x_K(y) + S_{K-1} \int_0^y (1 - x_K(s))\, ds \leq S_K \cdot y \quad \forall\, y \in [0, C_K]\label{eq:randLBcontraint}
\end{equation}
and $x_K(0) = 0$, $x_K$ non-decreasing, $x_K(y) \in [0,1]$. The minimum in \eqref{eq:contObjective} is achieved by maximizing the $x_K(y)$, and subject to the constraint~\eqref{eq:randLBcontraint}, $x_K(y)$ is maximized by greedily maximizing $x_K(s)$ for all $s\in[0,y]$. Thus, the constraint~\eqref{eq:randLBcontraint} is tight in the optimal solution, we can solve for $x_{K}(y)$ by differentiating both sides of the constraint with respect to $y$:
\begin{equation*}
    C_K x_K'(y) + S_{K-1}(1 - x_K(y)) = S_K
\end{equation*}
Rearranging and using the recurrence definition of $S_K$:
\begin{equation*}
    C_K x_K'(y) = S_K - S_{K-1} + S_{K-1} x_K(y) = \frac{S_{K-1}}{\exp(S_{K-1})-1} + S_{K-1} x_K(y)
\end{equation*}
Given the initial condition $x_K(0)=0$, this differential equation is solved by
\begin{equation*}
    x_K(y) = \frac{1}{\exp(S_{K-1})-1}\left(\exp\left(\frac{S_{K-1} y}{C_K}\right) - 1\right).
\end{equation*}
%Now, the offline optimum over the full type $K$ interval is $\opt = C_K$, as $M$ is arbitrarily large. 
Combining the lower bound~\eqref{eq:contObjective} on the algorithm's cost with the fact that $\opt\le C_K$, the competitive ratio satisfies:
\begin{equation*}
    \frac{\alg}{\opt} \geq 1 + \frac{S_{K-1}}{C_K}\int_0^{C_K}(1-x_K(y))\,dy
\end{equation*}
We now evaluate the integral:
\begin{align*}
    \int_0^{C_K}(1 - x_K(y))\,dy 
    &= \int_0^{C_K} \left(1 - \frac{1}{\exp(S_{K-1})-1}\left(\exp\left(\frac{S_{K-1}y}{C_K}\right)-1\right)\right)dy \\
    &= C_K + \frac{1}{\exp(S_{K-1})-1}\left(C_K - \frac{C_K}{S_{K-1}}\left(\exp(S_{K-1})-1\right)\right) \\
    &= C_K + \frac{C_K}{\exp(S_{K-1})-1} - \frac{C_K}{S_{K-1}} \\
    &= C_K\left(1 + \frac{1}{\exp(S_{K-1})-1} - \frac{1}{S_{K-1}}\right)
\end{align*}
Substituting back:
\begin{align*}
    \frac{\alg}{\opt} &\geq 1 + S_{K-1}\left(1 + \frac{1}{\exp(S_{K-1})-1} - \frac{1}{S_{K-1}}\right) \\
    &= S_{K-1} + \frac{S_{K-1}}{\exp(S_{K-1})-1} \\
    &= S_K
\end{align*}
where the last equality is the recurrence defining $S_K$. This completes the induction.
\end{proof}

\bibliography{esa_reference}

@inproceedings{ppporiginal,
  author       = {Adam Meyerson},
  title        = {The Parking Permit Problem},
  booktitle    = {46th Annual {IEEE} Symposium on Foundations of Computer Science, {FOCS}},
  pages        = {274--284},
  publisher    = {{IEEE} Computer Society},
  year         = {2005},
  url          = {https://doi.org/10.1109/SFCS.2005.72},
  doi          = {10.1109/SFCS.2005.72}
}

@article{multippp,
title = {On Generalizations of the Parking Permit Problem and Network Leasing Problems},
journal = {Electronic Notes in Discrete Mathematics},
volume = {62},
pages = {225-230},
year = {2017},
note = {LAGOS'17 – IX Latin and American Algorithms, Graphs and Optimization},
doi = {https://doi.org/10.1016/j.endm.2017.10.039},
author = {M.S. {de Lima} and M.C. {San Felice} and O. Lee}
}

@INPROCEEDINGS {2dppp,
author = { Hu, Xinhui and Ludwig, Arne and Richa, Andrea and Schmid, Stefan },
booktitle = { 2015 IEEE 35th International Conference on Distributed Computing Systems (ICDCS) },
title = {{ Competitive Strategies for Online Cloud Resource Allocation with Discounts: The 2-Dimensional Parking Permit Problem}},
year = {2015},
pages = {93-102},
doi = {10.1109/ICDCS.2015.18}
}

@inproceedings{setcoverleasing,
author = {Abshoff, Sebastian and Markarian, Christine and Meyer auf der Heide, Friedhelm},
title = {Randomized Online Algorithms for Set Cover Leasing Problems},
year = {2014},
doi = {10.1007/978-3-319-12691-3_3},
booktitle = {Combinatorial Optimization and Applications: 8th International Conference, COCOA},
pages = {25–34}
}

@inproceedings{vertexcoverfacilitylocationleasing,
  author       = {Christine Markarian and
                  Friedhelm Meyer auf der Heide},
  title        = {Online Algorithms for Leasing Vertex Cover and Leasing Non-metric Facility Location},
  booktitle    = {Proceedings of the 8th International Conference on Operations Research and Enterprise Systems, {ICORES}},
  pages        = {315--321},
  year         = {2019},
  doi          = {10.5220/0007369503150321}
}

@article{KarlinMMO94,
  author       = {Anna R. Karlin and
                  Mark S. Manasse and
                  Lyle A. McGeoch and
                  Susan S. Owicki},
  title        = {Competitive Randomized Algorithms for Nonuniform Problems},
  journal      = {Algorithmica},
  volume       = {11},
  number       = {6},
  pages        = {542--571},
  year         = {1994},
  doi          = {10.1007/BF01189993},
  bibsource    = {dblp computer science bibliography, https://dblp.org}
}

@inproceedings{threeppp,
author = {Yaroslav, Kharchenko and Alexander, Kononov},
title = {A Learning-Augmented Algorithm for the Parking Permit Problem with Three Permit Types},
year = {2024},
url = {https://doi.org/10.1007/978-3-031-62792-7_8},
doi = {10.1007/978-3-031-62792-7_8},
booktitle = {Mathematical Optimization Theory and Operations Research: 23rd International Conference, MOTOR},
pages = {116–126}
}

@misc{coveringproblemsaugmented,
      title={Learning-Augmented Online Covering Problems}, 
      author={Afrouz Jabal Ameli and Laura Sanita and Moritz Venzin},
      year={2025},
      eprint={2507.06032},
      archivePrefix={arXiv}
}

@article{pdsurvey,
  author       = {Niv Buchbinder and
                  Joseph Naor},
  title        = {The Design of Competitive Online Algorithms via a Primal-Dual Approach},
  journal      = {Found. Trends Theor. Comput. Sci.},
  volume       = {3},
  number       = {2-3},
  pages        = {93--263},
  year         = {2009},
  url          = {https://doi.org/10.1561/0400000024},
  doi          = {10.1561/0400000024}
}

@article{maximumflow, 
title={Maximal Flow Through a Network}, 
volume={8}, DOI={10.4153/CJM-1956-045-5}, 
journal={Canadian Journal of Mathematics}, 
author={Ford, L. R. and Fulkerson, D. R.}, 
year={1956}, 
pages={399–404}}

@article{edmondsflowers, 
title={Paths, 
Trees, and Flowers}, 
volume={17}, 
DOI={10.4153/CJM-1965-045-4}, 
journal={Canadian Journal of Mathematics}, 
author={Edmonds, Jack}, 
year={1965}, 
pages={449–467}}

@article{improvedflow,
author = {Edmonds, Jack and Karp, Richard M.},
title = {Theoretical Improvements in Algorithmic Efficiency for Network Flow Problems},
year = {1972},
publisher = {Association for Computing Machinery},
volume = {19},
number = {2},
url = {https://doi.org/10.1145/321694.321699},
doi = {10.1145/321694.321699},
journal = {J. ACM},
pages = {248–264}
}

@article{edmondsmatching,
  title={Maximum matching and a polyhedron with 0,1-vertices},
  author={Edmonds, Jack},
  journal={Journal of Research of the National Bureau of Standards B},
  volume={69},
  number={125-130},
  pages={55--56},
  year={1965}
}

@article{submodular,
  title={On submodular function minimization},
  author={Cunningham, William H},
  journal={Combinatorica},
  volume={5},
  number={3},
  pages={185--192},
  year={1985},
  publisher={Springer}
}

@article{setcover,
author = {Hochbaum, Dorit S.},
title = {Approximation Algorithms for the Set Covering and Vertex Cover Problems},
journal = {SIAM Journal on Computing},
volume = {11},
number = {3},
pages = {555-556},
year = {1982}
}

@article{lpintegralitygap,
author = {Lov\'{a}sz, L.},
title = {On the ratio of optimal integral and fractional covers},
year = {1975},
issue_date = {January, 1975},
publisher = {Elsevier Science Publishers B. V.},
address = {NLD},
volume = {13},
number = {4},
issn = {0012-365X},
url = {https://doi.org/10.1016/0012-365X(75)90058-8},
doi = {10.1016/0012-365X(75)90058-8},
journal = {Discrete Math.},
month = jan,
pages = {383–390},
numpages = {8}
}

@article{constrainedforestproblems,
author = {Goemans, Michel X. and Williamson, David P.},
title = {A General Approximation Technique for Constrained Forest Problems},
journal = {SIAM Journal on Computing},
volume = {24},
number = {2},
pages = {296-317},
year = {1995},
doi = {10.1137/S0097539793242618},
}

@inproceedings{steinertrees,
author = {Agrawal, Ajit and Klein, Philip and Ravi, R.},
title = {When trees collide: an approximation algorithm for the generalized Steiner problem on networks},
year = {1991},
url = {https://doi.org/10.1145/103418.103437},
doi = {10.1145/103418.103437},
booktitle = {Proceedings of the Twenty-Third Annual ACM Symposium on Theory of Computing, {STOC}},
pages = {134–144}
}

@inproceedings{coveringpackinglp,
  author       = {Niv Buchbinder and
                  Joseph Naor},
  title        = {Online Primal-Dual Algorithms for Covering and Packing Problems},
  booktitle    = {Algorithms - {ESA} 2005, 13th Annual European Symposium},
  series       = {Lecture Notes in Computer Science},
  pages        = {689--701},
  publisher    = {Springer},
  year         = {2005},
  url          = {https://doi.org/10.1007/11561071\_61},
  doi          = {10.1007/11561071\_61}
}

@inproceedings{onlinerouting,
  author       = {Niv Buchbinder and
                  Joseph Naor},
  title        = {Improved Bounds for Online Routing and Packing Via a Primal-Dual Approach},
  booktitle    = {47th Annual {IEEE} Symposium on Foundations of Computer Science, {FOCS}},
  pages        = {293--304},
  year         = {2006},
  url          = {https://doi.org/10.1109/FOCS.2006.39},
  doi          = {10.1109/FOCS.2006.39}
}

@inproceedings{onlinebipartitite,
  author       = {Richard M. Karp and
                  Umesh V. Vazirani and
                  Vijay V. Vazirani},
  title        = {An Optimal Algorithm for On-line Bipartite Matching},
  booktitle    = {Proceedings of the 22nd Annual {ACM} Symposium on Theory of Computing, {STOC}},
  pages        = {352--358},
  year         = {1990},
  url          = {https://doi.org/10.1145/100216.100262},
  doi          = {10.1145/100216.100262}
}

@article{onlinepaging,
  author       = {Daniel Dominic Sleator and
                  Robert Endre Tarjan},
  title        = {Amortized Efficiency of List Update and Paging Rules},
  journal      = {Commun. {ACM}},
  volume       = {28},
  number       = {2},
  pages        = {202--208},
  year         = {1985},
  url          = {https://doi.org/10.1145/2786.2793},
  doi          = {10.1145/2786.2793},
  bibsource    = {dblp computer science bibliography, https://dblp.org}
}

@inproceedings{onlinebipartititeextended,
  author       = {Nikhil R. Devanur and
                  Kamal Jain and
                  Robert D. Kleinberg},
  title        = {Randomized Primal-Dual analysis of {RANKING} for Online BiPartite Matching},
  booktitle    = {Proceedings of the Twenty-Fourth Annual {ACM-SIAM} Symposium on Discrete Algorithms, {SODA}},
  pages        = {101--107},
  year         = {2013},
  url          = {https://doi.org/10.1137/1.9781611973105.7},
  doi          = {10.1137/1.9781611973105.7}
}

@inproceedings{AzarBCCCG0KNNP16,
  author       = {Yossi Azar and
                  Niv Buchbinder and
                  T.{-}H. Hubert Chan and
                  Shahar Chen and
                  Ilan Reuven Cohen and
                  Anupam Gupta and
                  Zhiyi Huang and
                  Ning Kang and
                  Viswanath Nagarajan and
                  Joseph Naor and
                  Debmalya Panigrahi},
  title        = {Online Algorithms for Covering and Packing Problems with Convex Objectives},
  booktitle    = {{IEEE} 57th Annual Symposium on Foundations of Computer Science, {FOCS}},
  pages        = {148--157},
  year         = {2016},
  url          = {https://doi.org/10.1109/FOCS.2016.24},
  doi          = {10.1109/FOCS.2016.24}
}

@article{BansalBN12,
  author       = {Nikhil Bansal and
                  Niv Buchbinder and
                  Joseph Naor},
  title        = {A Primal-Dual Randomized Algorithm for Weighted Paging},
  journal      = {J. {ACM}},
  volume       = {59},
  number       = {4},
  pages        = {19:1--19:24},
  year         = {2012},
  url          = {https://doi.org/10.1145/2339123.2339126},
  doi          = {10.1145/2339123.2339126},
  bibsource    = {dblp computer science bibliography, https://dblp.org}
}

@inproceedings{CoesterTT26,
  author       = {Christian Coester and
                  Alexa Tudose and
                  Alexander Turoczy},
  title        = {Learning-Augmented Online Minimization with Dual Predictions},
  booktitle    = {Forty-third International Conference on Machine Learning, {ICML}},
  year         = {2026}
}

\appendix

\section{Omitted Proofs} \label{app:omittedproofs}

\lemrecurrenceproperty*
\begin{proof}
Let
\[
h(x):=\frac{x}{e^x-1},
\qquad
\Phi(x):=\int_1^x \frac{dt}{h(t)}
=\int_1^x \frac{e^t-1}{t}\,dt .
\]
Thus the recurrence is \(S_{K+1}=S_K+h(S_K)\).

First note that \(S_K\) is increasing. It cannot converge to a finite limit \(L\), since then
\(h(S_K)\to h(L)>0\), contradicting convergence. Hence $S_K\to\infty$.

Since $h(x)\to 0$ as $x\to\infty$, we further see that $S_K=o(K)$.

We now compare the discrete recurrence with the integral \(\Phi\). Since
\[
\Phi'(x)=\frac{1}{h(x)}=\frac{e^x-1}{x}
\]
and
\[
\Phi''(x)
=
\frac{(x-1)e^x+1}{x^2}
=
O\!\left(\frac{e^x}{x}\right),
\]
Taylor's formula gives, for \(x\to\infty\),
\[
\Phi(x+h(x))-\Phi(x)
=
\Phi'(x)h(x)
+
O\!\left(\frac{e^x}{x}h(x)^2\right).
\]
But \(\Phi'(x)h(x)=1\), and
\[
\frac{e^x}{x}h(x)^2
=
\frac{e^x}{x}\left(\frac{x}{e^x-1}\right)^2
=
O(xe^{-x})
=
O(h(x)).
\]
Therefore
\[
\Phi(x+h(x))-\Phi(x)=1+O(h(x)).
\]
Applying this with \(x=S_j\) and summing from \(j=1\) to \(K-1\), we obtain
\[
\Phi(S_K)
=
K-1+O\!\left(\sum_{j=1}^{K-1} h(S_j)\right)
=
K-1+O(S_K)
=
K+o(K).
\]
Hence $\Phi(S_K)\sim K$.

It remains to estimate \(\Phi\). By l'H\^opital's rule,
\[
\lim_{x\to\infty}\frac{\Phi(x)}{e^x/x}
=
\lim_{x\to\infty}
\frac{(e^x-1)/x}{e^x(x-1)/x^2}
=
1.
\]
Thus $\Phi(x)\sim \frac{e^x}{x}$

Combining this with \(\Phi(S_K)\sim K\), we get
\[
K\sim \frac{e^{S_K}}{S_K}.
\]
Taking logarithms yields
\begin{align}
\ln K
=
S_K-\ln S_K+o(1).\label{eq:lnK=SK-lnSK+o1}
\end{align}
Since \(S_K\to\infty\), we have \(\ln S_K=o(S_K)\), and therefore equation~\eqref{eq:lnK=SK-lnSK+o1} implies
\[
\frac{S_K}{\ln K}\to 1.
\]
Consequently,
\[
\ln S_K
=
\ln\ln K+o(1).
\]
Substituting this into equation~\eqref{eq:lnK=SK-lnSK+o1} proves the lemma.
\end{proof}
% \begin{proof}
%     We provide an informal derivation of $S_K = \ln K + \ln \ln K + o(1)$, because a fully rigorous derivation of the asymptotics would be a distraction from our main results.
%     We can rewrite the recurrence as $S_{K} - S_{K-1} = \frac{1}{\exp(S_{K-1}) - 1} \approx \frac{1}{\exp(S_{K-1}-1)}$. Once $S_K$ is large, we can approximate the discrete recurrence $S_K$ by a continuous function $S(K)$ with an ODE given by
%     \begin{equation}
%         \frac{dS}{dK} = \frac{1}{\exp(S)-1} \approx \frac{1}{\exp(S)}
%     \end{equation}

%     Meaning $K \sim \int \frac{1}{\exp(S)}dS= \text{Ei}(S)$, where $\text{Ei}$ is the exponential integral. It is known from asymptotic analysis texts \cite{asymptoticsreference} that $\text{Ei}(S) \sim \frac{e^S}{S}$, and hence $K \sim \frac{e^S}{S}$. This means $\ln K \sim S - \ln S$. 
    
%     We arrive at $S - \ln S = \ln (K) + o(1)$. By iteratively bootstrapping $S$ with better approximations, one obtains $S = \ln(K) + \ln(\ln(K))+ o(1) $.
% \end{proof}

\end{document}